\DeclareMathOperator{\opt}{OPT}
\newtheorem{theorem}{Theorem}[section]
\newtheorem{lemma}[theorem]{Lemma}
\newtheorem{proposition}[theorem]{Proposition}
\newtheorem{conjecture}{Conjecture}
\newtheorem{definition}[theorem]{Definition}
\newtheorem{remark}[theorem]{Remark}
\newcommand{\eps}{\epsilon}
\def\script#1{\mathcal{#1}}
\def\card#1{|#1|}
\def\set#1{\left\{#1\right\}}
\def\prob#1{\textup{\textsf{#1}}\xspace}
\def\etal{et al.\xspace}
\DeclareMathOperator{\polylog}{polylog}
\def\mypar#1{\smallskip\noindent \textbf{#1}}
\def\opt{\ensuremath{\mathrm{OPT}}}
\def\capacity{\mathrm{cap}}
\def\demand{\mathrm{dem}}
\def\Pairs{\script{M}}
\def\sT{\script{T}}
\def\bx{\mathbf{x}}
\def\bz{\mathbf{z}}
\def\sP{\script{P}} 
\def\prob#1{\textup{\textsc{#1}}\xspace}
\newcommand{\edp}{\prob{MaxEDP}}
\newcommand{\ndp}{\prob{MaxNDP}}
\newcommand{\termprs}{\mathcal{M}}
\newcommand{\Pfam}{\mathcal{P}}
\newcommand{\treedepth}{\mathrm{td}}
\newcommand{\Oh}{\mathcal{O}}
\newcommand{\bags}{\beta}
\newcommand{\adh}{\sigma}
\newcommand{\subtree}{\gamma}
\newcommand{\below}{\alpha}
\newcommand{\parent}{\mathrm{parent}}
\newcommand{\bd}{\delta}
\newcommand{\instance}{\mathcal{I}}
\newcommand{\capac}[1]{\mathrm{cap}(#1)}
\def\vd{\mathbf{d}}
\def\NP{\textsf{NP}}
\def\APX{\textsf{APX}}
\def\LP{\prob{\edp-LP}}
\begin{document}

\date{}
\title{On Routing Disjoint Paths in Bounded Treewidth Graphs}
%
%
\author{
Alina Ene\thanks{Department of Computer Science and DIMAP, University of Warwick. {\texttt{A.Ene@dcs.warwick.ac.uk}}}
\and
Matthias Mnich\thanks{Institut f{\"u}r Informatik, Universit{\"at} Bonn. {\texttt{mmnich@uni-bonn.de}}. Supported by ERC Starting Grant 306465 (BeyondWorstCase).}
\and
Marcin Pilipczuk\thanks{Institute of Informatics, University of Warsaw. \texttt{malcin@mimuw.edu.pl}. Research done while at University of Warwick, partially supported by DIMAP and by Warwick-QMUL Alliance in Advances in Discrete Mathematics and its Applications.}
\and
Andrej Risteski\thanks{Department of Computer Science, Princeton University. {\texttt{risteski@princeton.edu}}}
}

%
%
%

%
%

\maketitle              


\begin{abstract}
  We study the problem of routing on disjoint paths in bounded treewidth graphs with both edge and node capacities.
  The input consists of a capacitated graph $G$ and a collection of $k$ source-destination pairs $\Pairs = \set{(s_1, t_1), \dots, (s_k, t_k)}$.
  The goal is to maximize the number of pairs that can be routed subject to the capacities in the graph.
  A routing of a subset~$\Pairs'$ of the pairs is a collection $\mathcal{P}$ of paths such that, for each pair $(s_i, t_i) \in \Pairs'$, there is a path in $\mathcal{P}$ connecting $s_i$ to $t_i$.
  In the Maximum Edge Disjoint Paths (\edp) problem, the graph $G$ has capacities $\capacity(e)$ on the edges and a routing $\mathcal{P}$ is \emph{feasible} if each edge $e$ is in at most $\capacity(e)$ of the paths of $\mathcal{P}$.
  The Maximum Node Disjoint Paths (\ndp) problem is the node-capacitated counterpart of \edp.

  In this paper we obtain an $\Oh(r^3)$ approximation for \edp on graphs of treewidth at most~$r$ and a matching approximation for \ndp on graphs of pathwidth at most $r$.
  Our results build on and significantly improve the work by Chekuri \etal [ICALP 2013] who obtained an $\Oh(r \cdot 3^r)$ approximation for \edp.
   
\end{abstract}

\section{Introduction}
\label{sec:introduction}
In this paper, we study disjoint paths routing problems on bounded treewidth graphs. In this setting, we are given an undirected capacitated graph $G$ and a collection of source-destination pairs $\Pairs = \set{(s_1, t_1), (s_2, t_2), \dots, (s_k, t_k)}$. The goal is to select a maximum-sized subset $\Pairs' \subseteq \Pairs$ of the pairs that can be \emph{routed} subject to the capacities in the graph. More precisely, a routing of $\Pairs'$ is a collection $\sP$ of paths such that, for each pair $(s_i, t_i) \in \Pairs'$, there is a path in~$\sP$ connecting $s_i$ to $t_i$. In the Maximum Edge Disjoint Paths (\edp) problem, the graph $G$ has capacities $\capacity(e)$ on the edges and a routing $\sP$ is \emph{feasible} if each edge $e$ is in at most $\capacity(e)$ of the paths of $\sP$. The Maximum Node Disjoint Paths (\ndp) problem is the node-capacitated counterpart of \edp.

Disjoint paths problems are fundamental problems with a long history and significant connections to optimization and structural graph theory. The decision versions of \edp and \ndp ask whether all of the pairs can be routed subject to the capacities. Karp showed that, when the number of pairs is part of the input, the decision problems are \NP-complete (the node disjoint paths is part of Karp's original list of \NP-complete problems \cite{Karp72}). In undirected graphs, \edp and \ndp are solvable in polynomial time when the number of pairs is a fixed constant; this is a very deep result of Robertson and Seymour \cite{robertson1995graph} that builds on several fundamental results in structural graph theory from their graph minors project.

In this paper, we consider the optimization problems \edp and \ndp when the number of pairs are part of the input. These problems are \NP-hard and the main focus in this paper is on approximation algorithms for these problems in bounded treewidth graphs. Although they may appear to be quite specialized at first, \edp and \ndp on capacitated graphs of small treewidth capture a surprisingly rich class of problems; in fact, as shown by Garg, Vazirani, and Yannakakis \cite{garg1997primal}, these problems are quite interesting and general even on trees.

\edp and \ndp have received considerable attention, leading to several breakthroughs both in terms of approximation algorithms and hardness results. \edp is \APX-hard even in edge-capacitated trees \cite{garg1997primal}, whereas the decision problem is trivial on trees; thus some of the hardness of the problem stems from having to select a subset of the pairs to route. Moreover, by subdividing the edges, one can easily show that \ndp generalizes \edp in capacitated graphs. However, node capacities pose several additional technical challenges and extending the results for \edp to the \ndp setting is far from immediate even in restricted graph classes and our understanding of \ndp is more limited. 

In general graphs, the best approximation for \edp and \ndp is an $\Oh(\sqrt{n})$ approximation~\cite{CKS-sqrtn,KolliopoulosS04}, where $n$ is the number of nodes, whereas the best hardness for undirected graphs is only $\Omega((\log{n})^{1/2 - \eps})$~\cite{andrews2010inapproximability}. Bridging this gap is a fundamental open problem that seems quite challenging at the moment. There have been several breakthrough results on a relaxed version of these problems where congestion is allowed\footnote{A collection of paths has an \emph{edge} (resp. \emph{node}) \emph{congestion} of $c$ if each edge (resp. node) is in at most $c \cdot \capacity(e)$ (resp. $c \cdot \capacity(v)$) paths.}. This line of work has  culminated with a $\polylog(n)$ approximation with congestion $2$ for \edp \cite{chuzhoy2012polylogarithmic} and congestion $51$ for \ndp~\cite{ChekuriE13}. In addition to the routing results, this work has led to several significant insights into the structure of graphs with large treewidth and to several surprising applications \cite{chekuri2013large}.

Most of the results for routing on disjoint paths use a natural multi-commodity flow relaxation as a starting point. A well-known integrality gap instance due to Garg \etal \cite{garg1997primal} shows that this relaxation has an integrality gap of $\Omega(\sqrt{n})$, and this is the main obstacle for improving the $\Oh(\sqrt{n})$ approximation in general graphs. The integrality gap example is an instance on an $n \times n$ grid that exploits a topological obstruction in the plane that prevents a large integral routing (see Fig.~\ref{fig:lp}). Since an $n \times n$ grid has treewidth $\Theta(\sqrt{n})$, it suggests the following natural and tantalizing conjecture that was asked by Chekuri \etal\cite{chekuri2009note}.

\begin{conjecture}[\cite{chekuri2009note}]
\label{conj:tw}
	The integrality gap of the standard multi-commodity flow relaxation for \edp (and \ndp) is $\Theta(r)$ with congestion $1$, where $r$ is the treewidth of the graph.
\end{conjecture}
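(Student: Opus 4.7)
The $\Omega(r)$ lower bound follows directly from the integrality gap example of Garg, Vazirani, and Yannakakis referenced in Figure~\ref{fig:lp}: an $n \times n$ grid has treewidth $\Theta(n)$ and integrality gap $\Omega(n)$, so taking $r = \Theta(n)$ already rules out any upper bound better than $\Omega(r)$ on graphs of treewidth $r$. The substantive direction is to show that every fractional LP solution can be rounded to an integral routing losing only an $\Oh(r)$ factor.

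For the upper bound, the plan is to combine the standard multi-commodity flow LP with a tree decomposition $(T, \bags)$ of width $r$. First, I would apply a well-linked decomposition to the LP solution, a standard preprocessing that reduces the task to instances in which the source-destination pairs are cut-well-linked (at the cost of a polylogarithmic factor, or an $\Oh(1)$ factor if one is careful in the bounded-treewidth setting). After this reduction, I would process the tree decomposition bottom-up: for each adhesion $\adh$, the at most $r+1$ shared vertices can carry only $\Oh(r)$ node- or edge-disjoint paths across the adhesion, so the routing is forced into a tree-like flow pattern where each tree-edge has ``capacity'' $\Oh(r)$. On this derived tree one could then try to invoke an $\Oh(1)$-approximation for routing on trees in the spirit of \cite{garg1997primal}.

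The heart of the argument would be translating a tree-routing back to a routing in $G$. Each pair $(s_i, t_i)$ induces a subtree $\subtree_i$ of $T$ (the bags containing a piece of an LP path), and paths in $G$ correspond to walks visiting adhesions along $\subtree_i$. I would set up an LP whose variables encode, for each pair and each traversal of an adhesion, the probability of using that adhesion, and then round by iterating from leaves to root, at each step packing vertex-disjoint (or edge-disjoint) concatenations through a bag using Hall-type or matroid intersection arguments on the $\Oh(r)$ portals available in that bag.

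The main obstacle is avoiding an $r$-factor loss at every level of the decomposition, which would compound to $r^{\Omega(1)}$ and fall well short of the conjectured $\Oh(r)$. To achieve the tight bound one must charge the loss at each bag to a distinct portion of the LP value (say, to the contribution of the LP restricted to that bag's adhesion) so that losses sum rather than multiply across the tree. I expect this global accounting to be the chief technical difficulty, and it is almost certainly the obstruction that forces the $\Oh(r^3)$ bound of the present paper and the $\Oh(r \cdot 3^r)$ bound of Chekuri \etal~rather than the conjectured $\Theta(r)$; closing this gap cleanly remains the central challenge for Conjecture~\ref{conj:tw}.
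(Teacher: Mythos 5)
This statement is a conjecture, not a theorem of the paper: the paper does not prove it, explicitly presents it as open, and its own contributions (Theorems~\ref{thm:edp} and~\ref{thm:ndp}, via Theorem~\ref{thm:induction-main}) stop at an $\Oh(r^3)$ bound, improving the earlier $\Oh(r\cdot 3^r)$ of Chekuri \etal but still short of the conjectured $\Theta(r)$. Your proposal does not close this gap either, and you say so yourself in the last paragraph; what you have written is a research program, not a proof. The only part that is actually established is the easy direction: the $\Omega(r)$ lower bound from the Garg--Vazirani--Yannakakis grid instance (Fig.~\ref{fig:lp}), which is indeed standard and is why the conjecture asserts $\Theta(r)$ rather than $\Oh(r)$.

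On the substantive direction, two specific steps in your sketch would fail as stated. First, the claim that a well-linked decomposition loses only an $\Oh(1)$ factor ``if one is careful in the bounded-treewidth setting'' is unsupported: the best known loss in general graphs is $\Omega(\log n)$, and even this paper's improved decomposition for treewidth-$r$ graphs (Theorem~\ref{thm:wl-decomp}) loses a factor $\Oh(r^3)$, so opening with that reduction already spends more than the entire $\Oh(r)$ budget. Second, the step ``each adhesion carries only $\Oh(r)$ paths, so route on a derived tree and translate back'' hides exactly the difficulty the paper wrestles with: flow paths trapped strictly below a node (bad nodes) or cut off from the adhesion by a sparse cut (unsafe nodes) cannot be rerouted through the adhesion at all, and the paper's induction on $\ell_1+\ell_2$ is precisely a global accounting of how much flow is sacrificed when such nodes are handled --- an accounting that, even done carefully, yields only $\frac{1}{144 r^3}\left(1-\frac{1}{r}\right)^{\ell_1+\ell_2}\card{f}$. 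Your proposed fix (charging each bag's loss to a distinct portion of the LP value so losses add rather than multiply) is the right intuition for where an $\Oh(r)$ bound would have to come from, but you give no mechanism for it, and no such mechanism is currently known; this is exactly why Conjecture~\ref{conj:tw} remains open.
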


Recently, Chekuri, Naves, and Shepherd \cite{cns-tw} showed that \edp admits an $\Oh(r \cdot 3^r)$ approximation on graphs of treewidth at most $r$. This is the first approximation for the problem that is independent of~$n$ and $k$, and the first step towards resolving the conjecture. One of the main questions left open by the work of Chekuri \etal \cite{cns-tw} --- that was explicitly asked by them --- is whether the exponential dependency on the treewidth is necessary. In this paper, we address this question and we make a significant progress towards resolving Conjecture~\ref{conj:tw}. 

\begin{theorem}
\label{thm:edp}
  The integrality gap of the multi-commodity flow relaxation is $\Oh(r^3)$ for \edp in edge-capacitated undirected graphs of treewidth at most $r$. Moreover, there is a polynomial time algorithm that, given a tree decomposition of $G$ of width at most $r$ and a fractional solution to the relaxation of value \opt, it constructs an integral routing of size $\Omega(\opt / r^3)$.
\end{theorem}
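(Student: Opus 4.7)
The plan is to take an optimal fractional solution $f$ to the multi-commodity flow relaxation, of value $\opt$, together with a tree decomposition $(T,\bags)$ of width at most $r$, and produce an integral routing of size $\Omega(\opt/r^3)$ by processing $T$ bottom-up. After standard preprocessing (rooting $T$, enforcing bounded degree, and normalizing adhesions), I would assign each pair $(s_i,t_i)$ a \emph{home node} $x_i \in V(T)$: the lowest node of $T$ whose subtree hosts a constant fraction of the LP flow for that pair. At each node $x$, one then faces a local subproblem involving pairs homed at $x$ plus ``half-routed'' pairs inherited from children, and the algorithm solves these subproblems bottom-up while maintaining, for each child subtree, both a set of fully routed pairs and a set of pairs committed to exit through a designated portal of the adhesion.

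Three independent $O(r)$ losses should combine to give the $r^3$ guarantee. First, an adhesion $\adh(e)$ has at most $r+1$ portal vertices, so by averaging one portal carries $\Omega(1/r)$ of the LP flow crossing $e$; committing each pair to a single portal per adhesion yields a much more rigid structure at the cost of an $O(r)$ factor. Second, the auxiliary routing problem inside a single bag $\bags(x)$ lives on at most $r+1$ vertices, where the LP-integrality gap is $O(r)$ (in the worst case via a reduction to the tree result of Garg, Vazirani, and Yannakakis~\cite{garg1997primal}). Third, when a parent bag combines the outputs of its children with its own homed pairs, the same portal in $\adh(x)$ is claimed by multiple contributors; a greedy selection still preserves an $\Omega(1/r)$ fraction of the LP value.

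The main technical obstacle --- and the reason the Chekuri--Naves--Shepherd analysis loses $3^r$ --- is the bookkeeping over which \emph{subset} of the $r+1$ portals of an adhesion a given pair is allowed to use. I would replace the subset-level bookkeeping by the single-portal commitment above. Making this work requires proving that after all commitments the fractional LP value at each recursion level remains $\Omega(\opt/r^c)$ for some small constant $c$, even though the commitments restrict the pairs' flow. The most delicate point is that each of the three $O(r)$ losses must be incurred \emph{at most once along the root-to-leaf path taken by a pair} --- otherwise the factors would compound to $r^{\Oh(\mathrm{depth}(T))}$. I expect to handle this via a charging argument that attaches each loss to a unique event associated with $x_i$ (portal commitment when crossing the parent of $x_i$, bag-local rounding at $x_i$, and parent--child combination at $x_i$), so that the three $O(r)$ factors appear additively per pair and multiplicatively only globally over the three stages.
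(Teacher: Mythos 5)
Your proposal takes a different route from the paper, and it has a genuine gap at exactly the point you yourself flag as ``the most delicate'': you never establish that the three $\Oh(r)$ losses are paid once per pair rather than once per adhesion that the pair's path crosses. A pair homed at $x_i$ is not routed inside the bag $\bags(x_i)$; its path must thread through many descendant bags, and at every adhesion it crosses it competes for capacity with pairs homed at other nodes, so the single-portal commitment has to be made (and paid for) at every level the path traverses --- precisely the compounding $r^{\Omega(\mathrm{depth}(T))}$ you want to avoid. The charging argument you ``expect'' to give is the entire content of the theorem: the cross-level capacity interaction is exactly what forces the $3^r$ bookkeeping in Chekuri--Naves--Shepherd~\cite{cns-tw}, and replacing subsets of portals by a single portal does not by itself remove it. Two of your three building blocks are also unsubstantiated: the ``bag-local'' subproblem is not an \edp instance on $r+1$ vertices whose integrality gap you can quote --- the torso of a bag is not a tree, so the Garg--Vazirani--Yannakakis result~\cite{garg1997primal} does not apply, and the auxiliary capacities must encode connectivity through the child subtrees (a sparsifier), together with a lifting argument showing the auxiliary integral solution yields real disjoint paths in $G$; and rerouting the flow crossing an adhesion so that each pair uses a single portal costs only $\Oh(r)$ per pair only if suitable rerouting paths exist, which you do not argue.

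For comparison, the paper avoids per-pair portal bookkeeping entirely. It calls a node \emph{safe} if the flow originating in its subtree can be rerouted to its parent adhesion at rate $1/(4r)$, and \emph{good} if no flow path lies entirely below it, and inducts on the sum of the maximum parent-adhesion sizes of unsafe and of bad nodes, a quantity bounded by $2r$. Each inductive step either integrally routes the flow trapped below the topmost bad nodes via a route-to-a-small-set argument (Proposition~3.4 of~\cite{cns-tw}, invoked with $\alpha=4r$ and $|S|\le r$, which is where the $r^3$ arises, used together with a $1/r$ threshold), or splits the graph along a cut certifying unsafeness; each step sacrifices only a $(1-1/r)$ fraction of the \emph{total} flow, so the at most $2r$ steps cost a constant overall. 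That global potential-function induction, rather than a per-pair charging scheme, is the missing idea in your plan.
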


As mentioned above, \ndp in node-capacitated graphs is more general than \edp and it poses several additional technical challenges. In this paper, we give an $\Oh(r^3)$ approximation for \ndp on graphs of pathwidth at most $r$ with arbitrary node capacities. This is the first result for \ndp that is independent of $n$ and it improves the $\Oh(r \log r \log n)$ approximation of Chekuri \etal \cite{chekuri2009note}.

\begin{theorem}
\label{thm:ndp}
	The integrality gap of the multi-commodity flow relaxation is $\Oh(r^3)$ for \ndp in node-capacitated undirected graphs of pathwidth at most $r$. Moreover, there is a polynomial time algorithm that, given a path decomposition of $G$ of width at most $r$ and a fractional solution to the relaxation of value \opt, it constructs an integral routing of size $\Omega(\opt / r^3)$.
\end{theorem}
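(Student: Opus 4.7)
My plan is to adapt the approach of Theorem~\ref{thm:edp} to the node-capacitated setting, exploiting the linear (rather than branching) structure of a path decomposition. Fix a path decomposition $(B_1, \ldots, B_N)$ of $G$ of width $r$ and a feasible fractional solution $f$ to the \ndp LP of value $\opt$. I first decompose $f$ into a weighted family of $s_i$--$t_i$ paths. Because every bag is a vertex separator of $G$, each flow path for pair $i$ occupies at least one vertex of every bag in a contiguous interval $I_i$ of bag indices, which gives each supported pair a natural ``interval representation'' along the path decomposition.

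The first reduction introduces a \emph{pivot} vertex: in a central bag of $I_i$, I pick a vertex $v_i$ that carries an $\Omega(1/r)$ fraction of the flow routed for pair $i$. This costs a factor of $r$ and lets me group pairs by pivot; within each such group every selected pair must be routed through the common hub $v_i$, and the node capacity $\capac{v}$ directly bounds how many pairs can use $v$ as pivot.

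The second step splits each hub instance into a source-side and a sink-side one-sided routing on a smaller sub-decomposition, and applies the well-linking and rounding machinery of Theorem~\ref{thm:edp}, translated to node capacities by standard vertex splitting; this contributes a further $O(r)$ loss. The third $O(r)$ factor arises at the stitching phase: different pivot classes may compete for the same in-transit bag vertices, and I plan to use an interval-scheduling argument along the linear order of the bags to select an $\Omega(1/r)$ fraction of the candidate routings whose canonical skeleton paths collectively respect every node capacity.

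The main obstacle will be maintaining node-disjointness across pivot classes. In contrast to the edge case, a vertex used in transit by one selected routing cannot simultaneously serve as pivot for another, so I must commit each selected pair to a canonical skeleton that uses exactly one vertex per bag in $I_i$ and coordinate these choices across pairs. Proving that such a consistent choice exists with only a constant-factor additional loss, and that the three $O(r)$ losses multiply cleanly into $O(r^3)$ rather than interacting, is the technical crux of the argument; the rigid linear order of a path decomposition --- as opposed to the branching of a tree decomposition --- is precisely what makes this bookkeeping tractable and is the reason the result is stated for pathwidth rather than treewidth.
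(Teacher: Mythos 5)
Your plan is not a proof: the two steps you lean on most heavily are exactly the ones that fail or are left unresolved. First, you claim the ``well-linking and rounding machinery of Theorem~\ref{thm:edp}, translated to node capacities by standard vertex splitting'' can be applied with an extra $O(r)$ loss. Vertex splitting reduces node-capacitated \emph{fractional} flow to edge-capacitated flow, but it does not transfer the integral routing argument: the key ingredient behind Proposition~\ref{prop:small-cutset-routing} is the grouping/clustering technique of Chekuri, Khanna and Shepherd for \emph{edge}-disjoint paths, and the clusters it produces share vertices, so the resulting paths are edge-disjoint but not node-disjoint, and no congestion bookkeeping in your sketch repairs this. This is precisely why the paper does not reuse the \edp{} routing lemma but instead proves a genuinely node-disjoint analogue (Proposition~\ref{prop:ndp-small-cutset}): it symmetrizes the flow, restricts to a single sink, rounds to a half-integral flow via Lemma~\ref{lem:single-rounding}, and then invokes the node-flow-linked clustering result (Lemma~\ref{lem:node-clustering}) that builds a low-degree forest spanning groups of terminals, before a final single-commodity rounding. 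Without such a node-disjoint routing-to-a-small-set primitive, your second step has no justification.

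Second, your ``stitching phase'' is exactly the place where node-disjointness across overlapping pairs must be enforced, and you give no argument --- you yourself call it the crux. An interval-scheduling selection of an $\Omega(1/r)$ fraction of candidate routings does not obviously respect node capacities: many pairs' intervals cross the same bag, the fractional solution only bounds the \emph{aggregate} flow through that bag, and after your pivot selection and per-hub rounding nothing guarantees that the surviving candidates' skeletons can be made simultaneously capacity-respecting with only a constant or even $O(r)$ further loss (actual flow paths also need not use one vertex per bag, so the ``canonical skeleton'' is itself an unproved assumption). The paper avoids this difficulty entirely by a different global scheme: an induction on safe/good nodes (the same scheme as for \edp), where the $\ell_1<\ell_2$ case exploits the linear structure of the path decomposition (a single topmost bad node, so no disjointness issue between subinstances), and the $\ell_1=\ell_2$ case splits the graph along a small node cut $U$ obtained via Menger's theorem, recursing on $G[U]$ and $G-N[U]$ with a charging argument that accounts for the lost flow. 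You would need either to import that inductive framework or to supply a complete proof of your stitching claim; as written, the argument has a genuine gap at both of its $O(r)$-loss steps beyond the pivot selection.
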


The study of routing problems in bounded treewidth graphs is motivated not only by the goal of understanding the integrality gap of the multi-commodity flow relaxation but also by the broader goal of giving a more refined understanding of the approximability of routing problems in undirected graphs. Andrews \etal \cite{andrews2010inapproximability} have shown that \edp and \ndp in general graphs cannot be approximated within a factor better than $(\log{n})^{\Omega(1/c)}$ even if we allow a constant congestion $c \geq 1$. Thus in order to obtain constant factor approximations one needs to use additional structure. However, this seems challenging with our current techniques and there are only a handful of results in this direction.

One of the main obstacles for obtaining constant factor approximations for disjoint paths problems is that most approaches rely on a powerful pre-processing step that reduces an arbitrary instance of \edp/\ndp to a much more structured instance in which the terminals\footnote{The vertices participating in the pairs $\Pairs$ are called \emph{terminals}.} are \emph{well-linked}. This reduction is achieved using the well-linked decomposition technique of Chekuri, Khanna, and Shepherd \cite{cks-stoc05}, which necessarily leads to an $\Omega(\log{n})$ loss even in very special classes of graphs such as bounded treewidth graphs. Chekuri, Khanna, and Shepherd \cite{chekuri2009edge} showed that the well-linked decomposition framework can be bypassed in planar graphs, leading to a $\Oh(1)$ approximation for \edp with congestion $4$ (the congestion was later improved by S\'eguin-Charbonneau and Shepherd \cite{seguin2011maximum} from~$4$ to $2$). This result suggests that it may be possible to obtain constant factor approximations with constant congestion for much more general classes of graphs. In particular, Chekuri \etal \cite{cns-tw} conjecture that this is the case for the class of all minor-free graphs.

\begin{conjecture}[\cite{cns-tw}]
\label{conj:minor-free}
	Let $\script{G}$ be any proper minor-closed family of graphs. Then the integrality gap of the multi-commodity flow relaxation for \edp is at most a constant $c_{\script{G}}$ when congestion~$2$ is allowed.
\end{conjecture}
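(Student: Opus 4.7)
The plan is to attack Conjecture~\ref{conj:minor-free} via the Robertson--Seymour graph minor structure theorem, combined with the routing techniques of this paper and the planar congestion-$2$ result of S{\'e}guin-Charbonneau and Shepherd. Given a proper minor-closed family $\script{G}$, every $G \in \script{G}$ admits a tree decomposition in which each torso is an \emph{almost-embeddable} graph: it embeds on a surface of genus $g_{\script{G}} = \Oh(1)$ after deleting $a_{\script{G}} = \Oh(1)$ apex vertices and attaching $h_{\script{G}} = \Oh(1)$ vortices of pathwidth $p_{\script{G}} = \Oh(1)$, and the adhesions along the decomposition tree have size $\Oh(1)$. The goal is to obtain a constant-factor, congestion-$2$ routing within each piece and then glue these local routings using the bounded-size adhesion structure.

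The first step is to reduce to the single-piece case. Since the adhesions have bounded size, each demand pair $(s_i,t_i)$ either lies entirely inside a single torso or crosses a small number of adhesions whose configurations can be enumerated exactly --- side-stepping the Chekuri--Khanna--Shepherd well-linked decomposition, which would cost $\Omega(\log n)$. Instead I would push fractional demand up the decomposition tree as in the proof of Theorem~\ref{thm:edp}, charging each pair to a single ``responsible'' piece and encoding only $\Oh(1)$ bits of interface information. This leaves the problem: route, inside one almost-embeddable piece, a set of pairs whose endpoints may include artificial terminals placed on the adhesions.

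Within a single piece I would handle the three structural features separately. Apex vertices are handled by peeling off: any pair that (fractionally) prefers to route through an apex is routed directly through it, contributing $\Oh(a_{\script{G}})$ paths through each apex with congestion $1$. Vortices of pathwidth $p_{\script{G}}$ are handled by Theorem~\ref{thm:ndp}: their internal routing is resolved at a loss of $\Oh(p_{\script{G}}^3) = \Oh(1)$, and each vortex can then be abstracted as a ``fat edge'' of bounded capacity along its attachment path in the surface skeleton. Finally, for the apex- and vortex-free skeleton embedded on a surface of genus $g_{\script{G}}$, I would extend the S{\'e}guin-Charbonneau--Shepherd topological rounding by first cutting along $\Oh(g_{\script{G}})$ non-contractible simple curves; this reduces the instance to a disk-embedded (essentially planar) one, at the cost of duplicating $\Oh(g_{\script{G}})$ vertices, which the extra allowed congestion is there to absorb.

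The main obstacle will be the coupling between vortices and the bounded-genus skeleton: a path in the skeleton may need to enter and leave the same vortex at many points along its attachment path, and coordinating the surface-level topological rounding with the interior pathwidth-based rounding without multiplying congestion is delicate. A promising approach is to solve the multi-commodity relaxation on the piece first and then round in two alternating phases --- a surface phase that routes in the skeleton while respecting fat-edge capacities at vortex attachments, and a vortex phase that uses Theorem~\ref{thm:ndp} to realize the demand pattern induced on each vortex's boundary --- arguing inductively that each physical edge is used at most twice across the two phases. Whether the three regimes (adhesions, surface cuts, vortex attachments) can be combined with a total loss depending only on $\script{G}$ and final congestion at most $2$ is, I expect, exactly where new ideas beyond the present paper will be required.
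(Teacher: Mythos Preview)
The statement you are attempting to prove is Conjecture~\ref{conj:minor-free}, which the paper explicitly leaves \emph{open}. There is no proof in the paper to compare against: the authors quote the conjecture from~\cite{cns-tw}, make partial progress toward it via Theorem~\ref{thm:ksums}, and state that ``the only ingredient that is still missing is an algorithm for planar and bounded genus graphs with a constant number of vortices.'' Your proposal is therefore not a blind reconstruction of a known argument but a research plan for an open problem, and you yourself concede in the final paragraph that new ideas beyond the paper would be required.

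That said, two concrete points of friction with the paper's own machinery are worth flagging. First, the reduction to single pieces that the paper actually provides is Theorem~\ref{thm:ksums}, and it incurs an \emph{additive} $+3$ in congestion on top of whatever the base-class algorithm achieves; so even granting a congestion-$2$ result for almost-embeddable graphs, the present gluing argument would only yield congestion~$5$, not~$2$. Your sketch of ``pushing fractional demand up the decomposition tree \ldots\ encoding only $\Oh(1)$ bits of interface information'' would need to bypass the sparsifier-and-terminal-move step in the Remark at the end of Section~\ref{sec:edp}, which is precisely where the extra congestion enters. Second, invoking Theorem~\ref{thm:ndp} inside vortices gives an $\Oh(p_{\script{G}}^3)$ approximation for \ndp{} with congestion~$1$, but a vortex attached to a surface is a node-capacitated object interacting with an edge-capacitated skeleton; the paper offers no mechanism for composing an \ndp{} rounding inside the vortex with an \edp{} rounding on the surface without further congestion loss. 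Your identification of the vortex--surface coupling as ``the main obstacle'' is accurate and matches the paper's own assessment of what is missing, but the proposal as written is a programme, not a proof.
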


A natural approach is to attack Conjecture~\ref{conj:minor-free} using the structure theorem for minor-free graphs given by Robertson and Seymour \cite{robertson1986graph,robertson2003graph} that asserts
that every such graph admits a tree decomposition where the size of every adhesion (the intersection of neighboring bags) is bounded,
and after turning the adhesions into cliques, every bag induces a structurally simpler graph: one of bounded genus, with potentially a bounded number of apices and vortices.
Thus in some sense, in order to resolve Conjecture~\ref{conj:minor-free}, one needs to understand the base graph class (bounded genus graphs with apices and vortices) and how to tackle bounded width tree decompositions.

The recent work of Chekuri \etal \cite{cns-tw} has made a significant progress toward resolving Conjecture~\ref{conj:minor-free} by providing a toolbox for the latter issue, and the only ingredient that is still missing is an algorithm for planar and bounded genus graphs with a constant number of vortices (in the disjoint paths setting, apices are very easy to handle). However, one of the main drawbacks of their approach is that it leads to approximation guarantees that are \emph{exponential} in the treewidth.
Our work strengthens the approach of Chekuri \etal \cite{cns-tw} and it gives a much more graceful \emph{polynomial} dependence in the approximation ratio.

\begin{theorem}
\label{thm:ksums}
  Let $\script{G}$ be a minor-closed class of graphs such that the integrality gap of the multi-commodity flow relaxation is $\alpha$ with congestion $\beta$. Let $\script{G}_{\ell}$ be the class of graphs that admit a tree decomposition where, after turning all adhesions into cliques, each bag induces a graph from $\script{G}$, and each adhesion has size at most~$\ell$. Then the integrality gap of the relaxation for the class $\script{G}_{\ell}$ is $\Oh(\ell^3) \cdot \alpha$ with congestion~$\beta + 3$. 
\end{theorem}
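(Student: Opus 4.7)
The proof plan is to mirror the argument of Theorem~\ref{thm:edp}, replacing the step that uses ``bag size at most $r+1$'' by an invocation of the assumed base-class approximation. Fix $G \in \script{G}_\ell$ together with a witness tree decomposition $(T, \{\beta_t\}_{t \in V(T)})$ --- so each torso $\tau_t$ (the bag $\beta_t$ with every incident adhesion turned into a clique) lies in $\script{G}$, and every adhesion $\sigma_t$ has size at most $\ell$ --- and a fractional LP solution of value $\opt$.

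First, I would perform an adhesion-based decomposition of the LP solution, exactly as in the proof of Theorem~\ref{thm:edp}: assign every routed pair to a unique highest node $t \in V(T)$, and for each $t$ produce a local multi-commodity flow instance on the torso $\tau_t$ whose demand pairs are either the original pairs topping at $t$ or virtual pairs sitting on child adhesions $\sigma_c$ representing bundles of paths that transit the subtree below $c$. The $\Oh(\ell^3)$ loss here comes from well-linked decomposition and cut-matching across adhesions of size at most $\ell$ (grouping bundles of transiting paths into a constant number of virtual pairs per adhesion, then losing another factor $\ell$ to enforce well-linkedness). This part of the argument is oblivious to what happens inside a bag and therefore carries over verbatim from the treewidth setting, producing for every $t$ a feasible fractional LP on $\tau_t \in \script{G}$ with total value $\Omega(\opt / \ell^3)$.

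Second, at every $t$ I would apply the assumed $\alpha$-approximation with congestion $\beta$ to the local LP on $\tau_t$, yielding an integral routing that realizes a $1/\alpha$-fraction of the local LP value with congestion at most $\beta$. The local routings are then stitched together along adhesions: a virtual clique edge used at $t$ that corresponds to a demand on $\sigma_c$ is matched with the virtual clique edge appearing on the child side at $c$, and the two endpoints are joined into a single global path. Summing over all $t$ yields an integral routing in $G$ of value $\Omega(\opt / (\ell^3 \alpha))$. For the congestion, each real edge $e$ sits inside a unique bag $\beta_t$ and receives congestion $\beta$ from the local routing at $t$; the only additional congestion comes from realizing virtual adhesion edges in the boundedly-many bags whose adhesions include an endpoint of $e$, and a standard accounting caps this at an additive $3$, giving overall congestion $\beta + 3$.

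The main obstacle will be making the stitching step go through cleanly: one needs the virtual-pair demands produced at adhesion $\sigma_c$ from the parent side and from the child side to be compatible, so that the independently-computed local integral routings of Step~2 actually glue into simple $s_i$-$t_i$ paths in $G$. This compatibility is engineered into Step~1 via well-linkedness of the adhesion boundary, and checking that it survives the generalization from ``bag is constant-sized'' (Theorem~\ref{thm:edp}) to ``bag is an arbitrary graph in $\script{G}$'' --- while keeping the congestion bookkeeping additive rather than multiplicative across the tree, so one gets $\beta + 3$ rather than $\beta \cdot \Oh(\mathrm{depth})$ --- is the technical crux.
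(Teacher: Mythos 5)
Your plan diverges from the paper's proof in a way that leaves the two hardest steps unsupported. The central claim of your Step~1 --- that an ``adhesion-based decomposition of the LP solution'' into local instances on the torsos, with virtual pairs on child adhesions and an $\Oh(\ell^3)$ well-linkedness loss, ``carries over verbatim'' from the proof of Theorem~\ref{thm:edp} --- has no counterpart in that proof. Theorem~\ref{thm:induction-main} is a recursive flow-splitting argument driven by safe/good nodes; it never forms per-torso instances, never introduces virtual demands on adhesions, and uses bounded \emph{bag} size in exactly one place, namely the base case where all surviving flow paths cross $\bags(t_0)$ and Proposition~\ref{prop:small-cutset-routing} is applied with $S=\bags(t_0)$. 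So the decomposition you want to import would have to be built from scratch, and bundling transiting paths into virtual demands bag-by-bag is precisely the kind of bookkeeping that produced the exponential dependence in Chekuri--Naves--Shepherd. Your Step~2 has the same character: the local integral routings are computed independently, so the parent torso may route a virtual demand whose transit the child's integral solution does not realize (or realizes for a different subset of original pairs), and nothing in the sketch enforces consistency; likewise ``a standard accounting caps this at an additive $3$'' is asserted rather than argued --- realized virtual edges consume real capacity in descendant bags on top of the congestion-$\beta$ local routings already there, and an edge between two adhesion vertices can even lie in several bags, so without a concrete mechanism the congestion need not stay additive. You correctly identify this stitching as the crux, but the proposal does not supply the missing idea.

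The paper's actual argument is much lighter and avoids gluing altogether: it reruns the recursion of Theorem~\ref{thm:induction-main} unchanged --- every inductive step only uses that adhesions have size at most $\ell$ --- and modifies only the base case. There, instead of Proposition~\ref{prop:small-cutset-routing}, it follows Chekuri~et~al.: move the terminals along their flow paths into $\bags(t_0)$ (adding $2$ to the congestion), replace the components of $G-\bags(t_0)$ by $(\ell^2,2)$-sparsifiers (a factor $\ell^2$ in the ratio and $+1$ in the congestion), and apply the assumed algorithm for $\script{G}$ to the resulting graph, paying the factor $\alpha$ and congestion $\beta$. The base-class algorithm is thus invoked only on terminal, edge-disjoint base-case instances, so no cross-bag compatibility or congestion accumulation ever arises, and the $+3$ is exactly $2+1$ from the two preprocessing operations. (Note that the paper's own accounting in this remark actually yields $\Oh(\alpha\,\ell^{5})$ because of the sparsifier factor, versus the $\Oh(\alpha\,\ell^{3})$ in the theorem statement; but in either case the structure of the proof is the one just described, not a per-torso decomposition with stitching.) To salvage your route you would need to prove your Step~1 decomposition with only polynomial loss and design a consistent gluing scheme with additive congestion --- neither of which follows from anything in the paper.
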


We also revisit the well-linked decomposition framework of Chekuri \etal~\cite{cks-stoc05} and we ask whether the $\Omega(\log{n})$ loss is necessary for very structured graph classes. For bounded treewidth graphs, we give a well-linked decomposition framework that reduces an arbitrary instance of \edp to node-disjoint instances of \edp that are \emph{well-linked}. The loss in the approximation for our decomposition is only $\Oh(r^3)$, which improves the guarantee of $\Oh(\log{r} \log{n})$ from Chekuri \etal~\cite{chekuri2009note} when $r$ is much smaller than $n$.

It is straightforward to obtain the improved well-linked decomposition from our algorithm for \edp. Nevertheless, we believe it is beneficial to have such a well-linked decomposition, given that well-linked decompositions are one of the technical tools at the heart of the recent algorithms for routing on disjoint paths, integral concurrent flows \cite{chalermsook2012approximation}, and flow and cut sparsifiers \cite{chuzhoy2012vertex}. In particular, we hope that such a well-linked decomposition will have applications to finding flow and cut sparsifiers with Steiner nodes for bounded treewidth graphs. A sparsifier for a graph $G$ with~$k$ source-sink pairs is a significantly smaller graph~$H$ containing the terminals (and potentially other vertices, called Steiner nodes) that approximately preserves multi-commodity flows or cuts between the terminals. Such sparsifiers have been extensively studied and several results are known both in general graphs  and in bounded treewidth graphs (see Andoni et al.~\cite{AndoniGK14} and references therein).

A different question one could ask for problems in bounded treewidth graphs is whether additional computational power beyond polynomial-time running time can help with \edp or \ndp. 
It is a standard exercise to design an $n^{\Oh(r)}$-time dynamic programming algorithm (i.e., polynomial for every constant~$r$) for \ndp in uncapacitated graphs of treewidth $r$, while the aforementioned results on hardness of \edp in capacitated trees~\cite{garg1997primal}
rule out similar results for capacitated variants.
Between the world of having $r$ as part of the input, and having $r$ as a fixed constant, lies the world of \emph{parameterized complexity}, that asks for algorithms (called \emph{fixed-parameter algorithms})
with running time $f(r) \cdot n^c$, where $f$ is any computable function, and $c$ is a constant independent of the parameter. 
It is natural to ask whether allowing such running time can lead to better approximation algorithms.
As a first step towards resolving this question, we show a hardness for \ndp parameterized by \emph{treedepth}, a much more restrictive graph parameter than treewidth (cf.~\cite{sparsity}).
\begin{theorem}\label{thm:ndp-lb}
\ndp parameterized by the treedepth of the input graph is $W[1]$-hard, even with unit capacities.
\end{theorem}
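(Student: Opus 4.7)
The plan is to give a parameterized reduction from \textsc{Multicolored Clique}, which is $W[1]$-hard parameterized by the number of colors $k$. Given color classes $V_1, \dots, V_k$ and an edge set $E$, the goal is to construct an \ndp instance on a graph $H$ (with unit node capacities) together with a family $\Pairs$ of terminal pairs such that the pairs in $\Pairs$ admit a simultaneous node-disjoint routing if and only if the original instance admits a multicolored clique, and such that the treedepth of $H$ is bounded by some function $f(k)$.

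The construction I have in mind uses $\binom{k}{2}$ terminal pairs, one pair $(s_{ij}, t_{ij})$ per color pair $i<j$. For each edge $uv \in E$ with $u \in V_i$, $v \in V_j$, an internal ``edge-witness'' vertex $e_{uv}$ is placed so that the only $s_{ij}$--$t_{ij}$ routes go through exactly one such $e_{uv}$, corresponding to the choice of an edge from $V_i \times V_j$. Cross-pair consistency --- namely, that the $k-1$ routes involving color $i$ pick a common vertex $v_i \in V_i$ --- is enforced by making the edge-witnesses share a small bundle of auxiliary vertices per $u \in V_i$ that can be traversed at most once thanks to unit capacities, so that two routes through witnesses $e_{uw}$ and $e_{u'w'}$ with $u \neq u'$ both in $V_i$ would be forced to collide. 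One pair of colors therefore ``selects'' an edge, and the shared-bundle mechanism propagates a consistent vertex choice to every pair sharing that color.

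The main obstacle is realising this scheme on a graph of treedepth $f(k)$, since treedepth upper-bounds $\log(\text{longest path})$ and so every gadget, and hence every route, must be kept very shallow; all the hardness has to be packed into the combinatorics of which short route is chosen, not into global routing. The plan is to let $H$ consist of a small ``hub'' of $\Oh(k^2)$ distinguished vertices (the terminals plus $\Oh(1)$ interface nodes per pair and per color) together with $\Oh(|E|)$ edge-witness and bundle vertices, each adjacent only to hub vertices. Such an $H$ has a vertex cover of size $\Oh(k^2)$ and therefore treedepth $\Oh(k^2) + 1$: one eliminates the hub vertices first in a single chain, after which the remainder is an independent set. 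The real work will be to lay out the edge-witness and bundle vertices so that the selection-and-consistency logic above is faithfully implemented while respecting this hub-and-leaves architecture; if a direct implementation blows the hub up too much, a polynomial-in-$k$ inflation is still tolerable.

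Correctness is then a standard two-way check: a multicolored clique $\{v_1, \dots, v_k\}$ yields the canonical routing that sends pair $(s_{ij}, t_{ij})$ through $e_{v_iv_j}$ and through the $v_i$- and $v_j$-bundles, while any feasible routing of $\Pairs$ forces, by the shared-bundle consistency, a well-defined $v_i \in V_i$ for every color, and the presence of $e_{v_iv_j}$ in the routing certifies $v_iv_j \in E$ for every $i<j$. Since the reduction is polynomial in $|V| + k$ and blows the parameter up only to $f(k) = \Oh(k^2)$, this establishes $W[1]$-hardness of \ndp parameterized by treedepth, even under unit capacities.
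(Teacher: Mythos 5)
There is a fundamental obstruction to your approach, independent of how the gadget details are worked out. Your instance has exactly $\binom{k}{2}$ terminal pairs and asks whether \emph{all} of them can be routed node-disjointly (unit capacities). That is precisely the classical \textsc{Disjoint Paths} problem with $\binom{k}{2}$ pairs, which by Robertson and Seymour is solvable in time $f(p)\cdot n^{3}$ where $p$ is the number of pairs, i.e.\ it is FPT in $p$. Since your parameter blow-up keeps $p=\Oh(k^2)$, a correct reduction of this shape would let one decide \textsc{Multicolored Clique} in FPT time, i.e.\ it would prove $\mathrm{FPT}=W[1]$; so no such reduction can exist (under the standard assumption). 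The paper's construction dodges exactly this trap: it uses $k(n-1)$ ``filler'' pairs inside the color gadgets plus one candidate pair per edge of $G$ (so the number of pairs is \emph{not} bounded in $k$), and asks to route only $\ell=k(n-1)+\binom{k}{2}$ of them. The hardness lives in the \emph{selection} of which pairs to route, which is also consistent with the problem being \prob{MaxNDP} rather than the full-routing decision problem; any reduction whose target has $f(k)$ pairs all of which must be routed cannot work.

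A second, related gap is the consistency mechanism itself. With only one pair per color pair and unit capacities, ``agreement'' of the $k-1$ routes touching color $i$ must be enforced purely by forced vertex collisions, and the bundle scheme you sketch does not do this: if each $u\in V_i$ has its own bundle, routes through witnesses for \emph{different} $u\neq u'$ use different bundles and never collide, while if all of $V_i$ shares one small bundle then even routes agreeing on the same $u$ collide. (One can also check that interval-type gadgets cannot realize ``disjoint iff same choice'' for $k-1\geq 2$ routes.) The paper resolves this with a counting mechanism: each gadget $W^i$ contains $n$ vertex paths but only $n-1$ filler pairs, so exactly one path $X^i_{v^i}$ is left unused, and a tight vertex-budget argument (total length of all $\ell$ routed paths must equal $|V(H)|$) forces every filler path to be shortest, making the single free path a well-defined vertex choice that the cross pairs through $p_{i,j}$ must then certify as an edge. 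Your hub-and-leaves layout (vertex cover $\Oh(k^2)$) is also more restrictive than needed; the paper only needs treedepth $\Oh(k^2)$ and its gadgets contain paths of $k+1$ vertices, handled by deleting the $\binom{k}{2}$ vertices $p_{i,j}$ and two vertices per gadget.
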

Consequently, the existence of an \emph{exact} fixed-parameter algorithm is highly unlikely. 
We remark that our motivation for the choice of treedepth as a parameter stems also from the observation that a number of algorithms using the Sherali-Adams hierarchy to approximate a somewhat related problem
of~\prob{Nonuniform Sparsest Cut} in bounded treewidth graphs~\cite{sparsestcut-ChlamtacKR10,sparsestcut-GuptaTW13}
in fact implicitly uses a rounding scheme based on treedepth rather than treewidth.

\mypar{Paper organization.}
The rest of this paper is organized as follows. In Sect.~\ref{sec:prelim}, we formally define the problems and the multi-commodity flow relaxation. In Sect.~\ref{sec:edp}, we give our algorithm for \edp in bounded treewidth graphs and prove Theorems~\ref{thm:edp} and \ref{thm:ksums}. In Sect.~\ref{sec:ndp}, we extend our \edp algorithm to \ndp in bounded pathwidth graphs, and prove Theorem~\ref{thm:ndp}.
We give a well-linked decomposition for edge-capacitated graphs in Sect.~\ref{sec:wl-decomp}. 
Finally, we prove Theorem~\ref{thm:ndp-lb} in Sect.~\ref{sec:ndp-lb}.

\section{Preliminaries}
\label{sec:prelim}

\textbf{Tree and path decompositions.}
In this paper all tree decompositions are rooted; that is, a tree decomposition
of a graph $G$ is a pair $(\sT,\bags)$ where $\sT$ is a rooted tree and $\bags: V(\sT) \to 2^{V(G)}$ is a mapping
satisfying the following properties: (i) for every $e \in E(G)$, there exists a node $t \in V(\sT)$ with $e \subseteq \bags(t)$;
(ii) for every $v \in V(G)$ the set $\{t : v \in \bags(t)\}$ is nonempty and connected in $\sT$.

For a node $t \in V(\sT)$, we call the set $\bags(t)$ the \emph{bag} at node $t$, while for an edge $st \in E(\sT)$,
the set $\bags(t) \cap \bags(s)$ is called an \emph{adhesion}. 
For a non-root node $t \in V(\sT)$, by $\parent(t)$ we denote the parent of $t$, and by
$\adh(t) := \bags(t) \cap \bags(\parent(t))$ the adhesion on the edge to the parent of~$t$, called henceforth \emph{the parent adhesion};
for the root node $t_0 \in V(\sT)$ we put $\adh(t_0) = \emptyset$.
For two nodes $s,t \in V(\sT)$, we denote by $s \preceq t$ if $s$ is a descendant of $t$, and put
$\subtree(t) := \bigcup_{s \preceq t} \bags(s)$, $\below(t) := \subtree(t) \setminus \adh(t)$,
and $G(t) := G[\subtree(t)] \setminus E(G[\adh(t)])$.

\tikzset{vertex/.style={minimum size=1mm,circle,fill=black,draw, inner sep=0pt},
         decoration={markings,mark=at position .5 with {\arrow[black,thick]{stealth}}}}
%
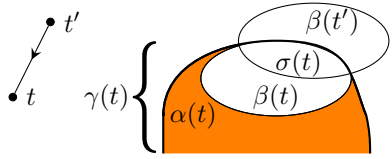
\begin{figure}
  \centering
\begin{tikzpicture}
  \node (t) at (-3.0,1.25) [vertex,label=right:$t'$] {};
  \node (s) at (-3.5,0.25)  [vertex,label=right:$t$] {};
  \draw[postaction={decorate}] (t)--(s);
  \draw[fill=orange,thick] (-1.5,-0.5) to [out=90,in=90](-1.5,0) to [out=90,in=180] (0,1) to [out=5,in=112](1,0.54) to [out=-65,in=112] (1.2,0) to [out=-78,in=90](1.25,-0.5);
  \draw[fill=white] (0.0,0.5) ellipse (1cm and 0.5cm);
  \draw[fill=white] (0.5,1.0) ellipse (1cm and 0.5cm);
  \draw[thick] (-1.5,0) to [out=90,in=180] (0,1) to [out=0,in=112](1,0.54) to [out=-65,in=112] (1.2,0);
  \draw (0.0,0.5) ellipse (1cm and 0.5cm);  
  \node at (0.75,1.25) {$\beta(t')$};
  \node at (0.0,0.25) {$\beta(t)$};
  \node at (0.3,0.7) {$\sigma(t)$};
  \node at (-1.1,0) {$\alpha(t)$};
  \node[yscale=1.7] at (-1.75,0.25) {{\Huge $\{$}};
  \node at (-2.25,0.25) {$\gamma(t)$};
\end{tikzpicture}
\caption{Notations used for a node $t$ with parent $t'$ in a tree decomposition $(\sT,\bags)$.}
\end{figure}

A \emph{torso} at node $t$ is a graph obtained from $G[\bags(t)]$ by turning every adhesion for an edge incident to $t$
into a clique.

We say that $(A, B)$ is a \emph{separation} in $G$ if $A \cup B = V(G)$ and there does not exist an edge of~$G$ with an endpoint in $A \setminus B$ and the other endpoint in $B \setminus A$.
We use the following well-known property of a tree decomposition.

\begin{lemma}[Lemma~{12.3.1} in \cite{diestel-book}]
\label{lem:adhesion-cutset}
  Let $(\sT,\bags)$  be a tree decomposition for a graph $G$.
  Then for every $t \in V(\sT)$ the pair $(\subtree(t), V(G) \setminus \below(t))$ is a separation of~$G$;
  note that $\subtree(t) \cap (V(G) \setminus \below(t)) = \adh(t)$.
\end{lemma}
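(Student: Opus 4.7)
The plan is to separate the two assertions and handle them in turn. For the intersection identity, I would simply unfold definitions: since $\below(t) = \subtree(t) \setminus \adh(t)$, one immediately gets $\subtree(t) \setminus \below(t) = \subtree(t) \cap \adh(t) = \adh(t)$, using that $\adh(t) \subseteq \bags(t) \subseteq \subtree(t)$. The union $\subtree(t) \cup (V(G) \setminus \below(t)) = V(G)$ is automatic from $\below(t) \subseteq \subtree(t)$, so the only substantive work lies in checking that $(\subtree(t), V(G) \setminus \below(t))$ really is a separation, i.e.\ that no edge of $G$ joins $\below(t)$ to $V(G) \setminus \subtree(t)$.

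I would argue this by contradiction: suppose $uv \in E(G)$ with $u \in \below(t)$ and $v \notin \subtree(t)$. By axiom (i) of a tree decomposition, some bag $\bags(s_0)$ contains both $u$ and $v$, so $s_0$ lies in each of $T_u := \{s \in V(\sT) : u \in \bags(s)\}$ and $T_v := \{s \in V(\sT) : v \in \bags(s)\}$. The plan is then to show that $T_u$ is contained in the subtree rooted at $t$ while $T_v$ is disjoint from it, contradicting the common node $s_0$.

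For $T_v$ the inclusion $v \notin \subtree(t) = \bigcup_{s \preceq t} \bags(s)$ immediately gives $T_v \cap \{s : s \preceq t\} = \emptyset$, which handles that side. For $T_u$ I would use that $u \in \subtree(t)$ so $T_u$ meets $\{s : s \preceq t\}$; if $T_u$ also contained some node outside the subtree rooted at $t$, then by the connectivity axiom (ii) the path in $\sT$ between the inside and outside node of $T_u$ must cross the edge from $t$ to $\parent(t)$, forcing both $t$ and $\parent(t)$ into $T_u$. This gives $u \in \bags(t) \cap \bags(\parent(t)) = \adh(t)$, contradicting $u \in \below(t) = \subtree(t) \setminus \adh(t)$.

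This is a textbook manipulation of the two tree-decomposition axioms, and I do not expect a real obstacle; the only mildly delicate point is the connectivity argument that pins $T_u$ inside the subtree rooted at $t$, which is precisely where axiom (ii) is used.
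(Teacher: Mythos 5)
Your argument is correct and is essentially the standard proof of the result the paper only cites (Diestel, Lemma~12.3.1) rather than proving: both endpoints of an edge share a bag, and connectivity of $\{s : u \in \bags(s)\}$ forces any bag-set meeting the subtree rooted at $t$ and its complement to contain both $t$ and $\parent(t)$, hence $u \in \adh(t)$, contradicting $u \in \below(t)$. The only unmentioned point is the root node, where $\parent(t)$ is undefined; there the claim is vacuous since $\subtree(t) = V(G)$, so this is not a gap.
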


A \emph{path decomposition} is a tree decomposition where $\sT$ is a path, rooted at one of its endpoints.

The width of a tree or path decomposition $(\sT, \bags)$ is defined as $\max_t |\bags(t)| - 1$. 
To ease the notation, we will always consider decompositions of width \emph{less} than~$r$, for some integer $r$,
so that every bag is of size at most $r$.

\medskip
\noindent
\textbf{Problem definitions.}
The input to \edp is an undirected graph $G$ with edge capacities $\capacity(e) \in \mathbb{Z}_+$ and a collection $\Pairs = \set{(s_1, t_1), \dots, (s_k, t_k)}$ of vertex pairs. A \emph{routing} for a subset $\Pairs'\subseteq\Pairs$ is a collection $\mathcal{P}$ of paths in $G$ such that, for each pair $(s_i, t_i) \in \Pairs'$, $\mathcal{P}$ contains a path connecting~$s_i$ to $t_i$. The routing is \emph{feasible} if every edge $e$ is in at most $\capacity(e)$ paths. In the Maximum Edge Disjoint Paths problem (\edp), the goal is to maximize the number of pairs that can be feasibly routed. The Maximum Node Disjoint Paths problem (\ndp) is the node-capacitated variant of \edp in which each node $v$ has a capacity $\capacity(v)$ and in a feasible routing each node appears in at most $\capacity(v)$ paths.

We refer to the vertices participating in the pairs $\Pairs$ as \emph{terminals}. It is convenient to assume that~$\Pairs$ form a matching on the terminals; this can be ensured by making several copies of a terminal and attaching them as leaves.

\medskip
\noindent
\textbf{Multicommodity flow relaxation.}
We use the following standard multicommodity flow relaxation for \edp (there is an analogous relaxation for \ndp). We use $\sP(u, v)$ to denote the set of all paths in $G$ from $u$ to $v$, for each pair $(u, v)$ of nodes. Since the pairs $\Pairs$ form a matching, the sets $\sP(s_i, t_i)$ are pairwise disjoint. Let $\sP = \bigcup_{i = 1}^k \sP(s_i, t_i)$. The LP has a variable $f(p)$ for each path $p \in \sP$ representing the amount of flow on $p$. For each pair $(s_i, t_i) \in \Pairs$, the LP has a variable $x_i$ denoting the total amount of flow routed for the pair (in the corresponding IP, $x_i$ denotes whether the pair is routed or not). The LP imposes the constraint that there is a flow from $s_i$ to $t_i$ of value $x_i$. Additionally, the LP has capacity constraints that ensure that the total amount of flow on paths using a given edge (resp. node for \ndp) is at the capacity of the edge (resp. node). 

\begin{figure}
\vspace{-0.2in}
\begin{center}
\begin{boxedminipage}{0.42\textwidth}
\vspace{-0.1in}
\begin{align*}
	& (\LP) &\\
	\quad \max \quad & \sum_{i = 1}^k x_i &\\
	\text{s.t.} \quad & \sum_{p \in \sP(s_i, t_i)} f(p) = x_i \leq 1,& i=1,\hdots,k\\
	& \sum_{p:\; e \in p} f(p) \leq \capacity(e), & e \in E(G)\\\
	& f(p) \geq 0, & p \in \sP \enspace .
\end{align*}
\end{boxedminipage}
\hspace{0.1in}
\begin{boxedminipage}{0.40\textwidth}
\centering
\includegraphics[scale=0.65]{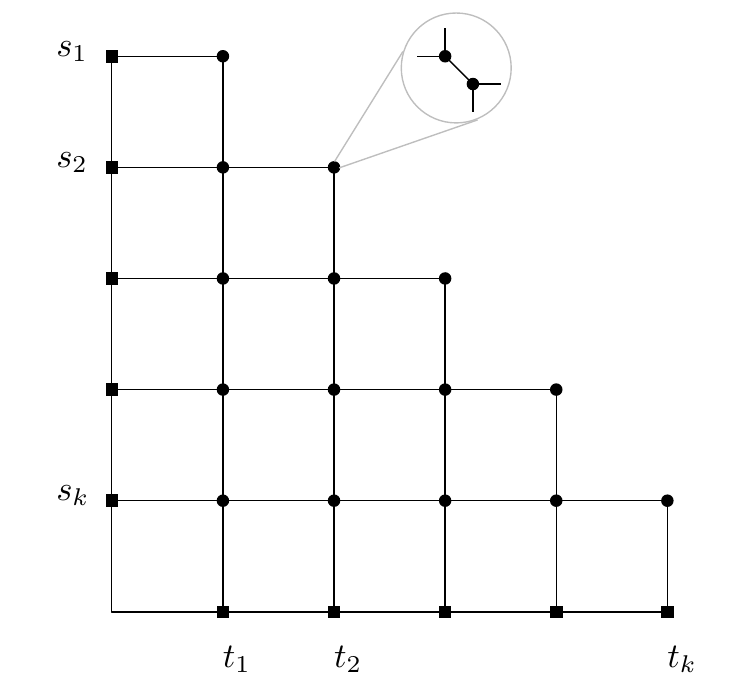}
\end{boxedminipage}
\vspace{-0.2in}
\end{center}
\caption{The multi-commodity flow relaxation for \edp. The instance on the right is the $\Omega(\sqrt{n})$ integrality gap example for \edp with unit edge capacities~\cite{garg1997primal}. Any integral routing routes at most one pair whereas there is a multi-commodity flow that sends $1/2$ units of flow for each pair $(s_i, t_i)$ along the canonical path from $s_i$ to $t_i$ in the grid.}
\label{fig:lp}
\end{figure}
It is well-known that the relaxation \LP can be solved in polynomial time, since there is an efficient separation oracle for the dual (alternatively, one can write a compact relaxation). We use $(f, \bx)$ to denote a feasible solution to \LP for an instance $(G, \Pairs)$ of \edp. For each terminal $v$, we also use $x(v)$ to denote the total amount of flow routed for $v$ and we refer to~$x(v)$ as the \emph{marginal value} of $v$ in the multi-commodity flow $f$.

\section{Algorithm for \edp in Bounded Treewidth Graphs}
\label{sec:edp}
We give a polynomial time algorithm for \edp that achieves an $O(r^3)$ approximation for graphs with treewidth less than $r$.
Our algorithm builds on the work of Chekuri \etal~\cite{cns-tw}, and it improves their approximation guarantee from $O(r \cdot 3^r)$ to $O(r^3)$.
We use the following routing argument as a building block.

\begin{proposition}[Proposition~{3.4} in \cite{cns-tw}]
\label{prop:small-cutset-routing}
	Let $(G, \Pairs)$ be an instance of \edp and let $(f, \bx)$ be a feasible fractional solution for the instance.
	If there is a second flow that routes at least $x(v)/\alpha$ units of flow for each $v$ to some set $S \subseteq V(G)$, where $\alpha \geq 1$ then there is an integral routing of at least ${\card{f} \over 36 \alpha \card{S}}$ pairs.
\end{proposition}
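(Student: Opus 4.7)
The plan is to combine the given second flow $g$ with the fractional pair routing $(f,\bx)$, concentrate the routing at a single well-chosen vertex $s^\star \in S$ by pigeonhole, and then apply a grouping-based rounding argument to extract an integral pair-routing. The three constants one loses --- a scaling factor for combining $f$ and $g$, an $|S|$ factor from pigeonhole, and an $\alpha$ factor inherited from the weaker second flow --- then combine with a constant-factor loss from grouping to give the claimed bound of $|f|/(36\alpha|S|)$.

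I would first rescale $f$ and $g$ by $\tfrac{1}{2}$, so that their union respects the edge capacities of $G$: after rescaling, pair $i$ is fractionally routed at value $x_i/2$, and each terminal $v$ fractionally routes $x(v)/(2\alpha)$ to $S$. Decomposing $g$ into paths yields values $g(v,s) \geq 0$ with $\sum_{s\in S} g(v,s) \geq x(v)/(2\alpha)$. Summing over terminals gives $\sum_{v,s} g(v,s) \geq |f|/\alpha$, so by averaging over $|S|$, some $s^\star \in S$ satisfies $\sum_v g(v,s^\star) \geq |f|/(\alpha|S|)$.

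The second step, which is the most delicate, is to convert this single-sink concentration into a fractional routing of many pairs through $s^\star$. If for each pair $i$ both endpoints sent substantial $g$-flow to $s^\star$, one would simply concatenate $s_i \to s^\star \to t_i$ along $g$-paths. In general, however, the pigeonhole only guarantees that $s_i$ or $t_i$ (not necessarily both) sends large flow to $s^\star$. To handle the asymmetric case I would use the original flow $f_i$ as a bridge: whenever $s_i$ reaches $s^\star$ through $g$ but $t_i$ does not, the combined flow $f_i + g$ provides a fractional path from $s^\star$ back to $t_i$ of comparable value obtained by following the reversed $g$-leg from $t_i$ or, failing that, by concatenating with an $(s_i,t_i)$-path from $f$. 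After this bridging, every pair participating in the $s^\star$-mass contributes to a ``star'' routing through $s^\star$ of total value $\Omega(|f|/(\alpha|S|))$, using combined edge load at most a constant multiple of $\capacity(e)$.

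Finally, I would apply a grouping lemma at $s^\star$: given a fractional pair-routing of total value $Y$ in which all paths pass through a single node, the Chekuri--Khanna--Shepherd grouping technique produces an integral routing of $\Omega(Y)$ pairs with constant congestion; absorbing the congestion by another constant rescaling and tallying up the three scaling losses gives the factor $36$. The hard part, as indicated above, is really the bridging argument in the second step: pigeonhole alone concentrates \emph{terminals} at $s^\star$ but not \emph{pairs}, and it is the interplay with the original flow $f$ that glues orphan endpoints back in. Once this is done correctly, the rounding step is essentially the standard single-sink grouping.
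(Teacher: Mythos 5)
Your overall architecture is the right one, and it matches how this statement is actually proved (the paper itself only imports it from Chekuri et al.; the closest in-paper argument is the node-capacitated analogue, Proposition~\ref{prop:ndp-small-cutset} in Sect.~\ref{ssec:ndp-small-cutset}): superpose $f$ and $g$ feasibly, average over $S$ to pick a single vertex $s^\star$, and use $f$ as a bridge so that \emph{both} endpoints of each surviving pair send comparable flow to $s^\star$. Note, though, that the bridging you single out as the hard part is in fact the routine symmetrization step (take $g/3$, reroute the flow leaving one endpoint along the pair's commodity in $f/(3\alpha)$ to the other endpoint and then along its $g$-paths; doing this after the pigeonhole, as you propose, works with the same $1/3$ scaling). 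The genuine gap is in your final step. You invoke a grouping black box that returns an integral routing of $\Omega(Y)$ pairs ``with constant congestion'' and then propose to absorb the congestion ``by another constant rescaling''. That move is invalid: an \emph{integral} routing with congestion $2$ or $3$ cannot in general be traded for a congestion-$1$ routing of a constant fraction of its pairs (the grid gap instance of Fig.~\ref{fig:lp} admits large routings at congestion $2$ but only $O(1)$ pairs at congestion $1$), and scaling the fractional flow down before rounding does not help, since each integral path still loads an edge by a full unit. The rounding must be engineered to output a capacity-respecting routing directly.

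Moreover, the step you call ``essentially the standard single-sink grouping'' is exactly where the work lies, and your proposal contains no substitute for its key idea. After symmetrization each selected pair ships only about half a unit from each endpoint to $s^\star$; a single-commodity integral rounding of this flow preserves total value but not the pairing, and in the worst case it routes exactly one endpoint of every pair, yielding zero routed pairs. The known proof (and the paper's proof of Proposition~\ref{prop:ndp-small-cutset}) overcomes this by clustering terminals along disjoint trees in the support of the flow (cf.\ Lemma~\ref{lem:node-clustering}), keeping at most one representative pair per cluster, and borrowing flow from the other terminals of the cluster so that every surviving terminal ships strictly more than $1/2$ (the paper uses $3/5$) to $s^\star$; only then does max-flow integrality force a constant fraction of pairs to have \emph{both} endpoints routed integrally, after which the two legs are concatenated at $s^\star$. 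Without this boosting-above-$1/2$ clustering (or an equivalent device), your plan does not yield a feasible integral routing of $\Omega(\card{f}/(\alpha\card{S}))$ pairs, and the specific constant $36$ is of course not reached either.
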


Our starting point is a tree decomposition $(\sT,\bags)$ for $G$ of width less than~$r$ and a fractional solution
$(f, \bx)$ to the multicommodity flow relaxation for \edp given in Section~\ref{sec:prelim},
that is, the flow $f$ routes $\bx(v)$ units of flow for each vertex $v \in V$.
We let $\card{f}$ denote the total amount of flow routed by $f$, i.e., $\card{f} = {1 \over 2} \sum_{v \in V} \bx(v)$.

The following definitions play a key role in our algorithm.

\begin{definition}[Safe node]
A node $t \in V(\sT)$ is \emph{safe} with respect to $(f, \bx)$
if there is a second multicommodity flow $g$ in $G(t)$ such that $g$ satisfies the edge capacities of $G(t)$ and,
for each vertex $z \in \subtree(t)$, $g$ routes at least ${1 \over 4r} \cdot \bx(z)$ units of flow from $z$ to the adhesion $\adh(t)$.
The node $t$ is \emph{unsafe} if it is not safe.
\end{definition}

\begin{definition}[Good node]
A node $t \in V(\sT)$ is \emph{good} with respect to $(f, \bx)$ if every flow path in the support of $f$ that
has an endpoint in $\subtree(t)$
also intersects~$\adh(t)$;
in other words, there does not exist a flow path that is completely contained in~$G[\below(t)]$.
A node is \emph{bad} if it is not good.
\end{definition}

\begin{remark}
If a node $t$ is good then it is also safe, as shown by the following multicommodity flow $g$ in~$G(t)$.
For each path $p$ in the support of $f$ that originates in $\subtree(t)$,
let $p'$ be the smallest prefix of $p$ that ends at a vertex of $\adh(t)$
(since $p$ intersects $\adh(t)$, there is such a prefix);
we set $g(p') = f(p)$. The resulting flow $g$ is a feasible multicommodity flow in $G(t)$ that routes $\bx(z)$ units of flow from $z$ to
  $\adh(t)$ for each vertex $z \in \subtree(t)$. Therefore, $t$ is safe.
\end{remark}

Our approach is an inductive argument based on the maximum size of a parent adhesion
that is bad or unsafe.
More precisely, we prove the following:

\begin{theorem} \label{thm:induction-main}
	Let $(G, \Pairs)$ be an instance of \edp and let $(f, \bx)$ be a fractional solution for $(G, \Pairs)$, where $f$ is a feasible multicommodity flow in~$G$ for $\Pairs$ with marginals $\bx$. Let $(\sT,\bags)$ be a tree decomposition for $G$ of width less than $r$.
Let $\ell_1$ be the maximum size of a parent adhesion of an unsafe node, and let $\ell_2$ be the maximum size of a
parent adhesion of a bad node.
There is a polynomial time algorithm that constructs an integral routing of size at least
${1 \over 144 r^3} \cdot \left(1 - {1 \over r} \right)^{\ell_1 + \ell_2} \cdot \card{f}.$
\end{theorem}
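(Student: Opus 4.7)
I would prove Theorem~\ref{thm:induction-main} by induction on the quantity $\ell_1+\ell_2$, interpreting the bound as an amortized guarantee that each unit of maximum bad/unsafe parent adhesion costs a multiplicative $(1-1/r)$ factor in the routed flow. The base case $\ell_1=\ell_2=0$ handles the situation in which every non-root node of $\sT$ is simultaneously good and safe, and the inductive step performs a local surgery that reduces $\ell_1+\ell_2$ by one while discarding at most a $1/r$ fraction of the flow.

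\textbf{Base case.} When $\ell_1=\ell_2=0$, every non-root node of $\sT$ is both good and safe. Using a standard centroid-style argument on $\sT$, I would find a node $t^{\ast}$ such that the flow of $f$ with at least one endpoint in $\subtree(t^{\ast})$ has value $\Omega(|f|)$, while for each child $t'$ of $t^{\ast}$ the corresponding quantity on $\subtree(t')$ is strictly smaller. Safeness of $t^{\ast}$ provides a secondary flow in $G(t^{\ast})$ routing $\bx(v)/(4r)$ units from each $v\in\subtree(t^{\ast})$ to $S:=\adh(t^{\ast})$; by Lemma~\ref{lem:adhesion-cutset} the set $S$ separates $\subtree(t^{\ast})$ from the rest of $G$ and $|S|\le r$. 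Goodness of the children of $t^{\ast}$ further ensures that every flow path of $f$ with an endpoint in $\subtree(t^{\ast})$ crosses $S$, and an extra factor of $r$ is absorbed by splitting the flow according to which vertex of $S$ a path traverses and keeping only the largest class. Proposition~\ref{prop:small-cutset-routing} applied with $\alpha=4r$ and $|S|\le r$ on the chosen class then yields an integral routing of size $\Omega(|f|/r^{3})\ge |f|/(144 r^{3})$, matching the base-case bound.

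\textbf{Inductive step.} Assume $\ell_1+\ell_2\ge 1$ and select a witness node $t$: a bad node with $|\adh(t)|=\ell_2$ if $\ell_2\ge 1$, else an unsafe node with $|\adh(t)|=\ell_1$. In the bad case I split $f$ into $f_1$, the sub-flow on paths supported entirely in $G[\below(t)]$, and $f_2:=f-f_1$; the sub-instance for $f_1$ lives on $G[\below(t)]$ with the subtree of $\sT$ rooted at $t$ re-rooted so that its parent adhesion disappears, dropping $\ell_2$ by one, while for $f_2$ deleting a single vertex $u\in\adh(t)$ carrying the least flow (at most $|f_2|/r$ by pigeonhole, since $|\adh(t)|\le r$) strictly shrinks the offending parent adhesion. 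In the unsafe case, the failure of $t$ to be safe is certified, via LP duality on the multicommodity-flow defining safeness, by an edge cut $C\subsetneq\adh(t)$ in $G(t)$; installing $C$ as the new parent adhesion of $t$ via a local surgery on $(\sT,\bags)$ strictly decreases $\ell_1$. In both cases the inductive hypothesis applied to the resulting sub-instance(s), whose total fractional value is at least $(1-1/r)|f|$ and whose parameter $\ell_1+\ell_2$ is smaller by one, delivers the claimed bound $\tfrac{1}{144 r^{3}}(1-1/r)^{\ell_1+\ell_2}|f|$ after elementary arithmetic using $(1-1/r)\cdot(1-1/r)^{\ell_1+\ell_2-1}=(1-1/r)^{\ell_1+\ell_2}$.

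\textbf{The hard part.} The crux of the argument is the unsafe-case surgery: after replacing $\adh(t)$ by the smaller cut $C$, one must verify that no node of the modified decomposition above $t$ becomes bad or unsafe with parent adhesion of size $\ge\max(\ell_1,\ell_2)$, and that the width stays essentially bounded by $r$. I anticipate handling this by augmenting every bag on the $t$-to-root path of $\sT$ with the vertices of $C$: unaffected bags retain their safe/good certificates (their secondary flows carry over verbatim), while the new node carrying $C$ as its parent adhesion replaces $t$ as the potential witness but with strictly smaller adhesion. Keeping track of the interplay between $\ell_1$ and $\ell_2$ through these modifications, and certifying that $\ell_1+\ell_2$ decreases \emph{globally} rather than merely at the witness node, is the most delicate ingredient; the remaining bookkeeping of the constants $1/144$ and $1/r$ should be routine.
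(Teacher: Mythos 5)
Your high-level plan (induct, pay a $(1-1/r)$ factor per unit of $\ell_1+\ell_2$, use safeness plus Proposition~\ref{prop:small-cutset-routing}) matches the spirit of the paper, but the mechanisms you propose for both inductive cases have genuine gaps. In the bad case you operate on a \emph{single} witness node $t$, whereas the maximum in the definition of $\ell_2$ can be attained at many incomparable nodes, so neither re-rooting the subtree at $t$ nor deleting one vertex of $\adh(t)$ decreases $\ell_2$ globally; you flag this yourself, but it is exactly the point where your scheme breaks, and the paper resolves it by processing \emph{all} topmost bad nodes $t_1,\dots,t_p$ with parent adhesion of size $\ell_2$ simultaneously. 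Moreover your split into $f_1$ (paths inside $G[\below(t)]$) and $f_2=f-f_1$ produces two sub-instances whose graphs share edges, so integral routings obtained recursively for them cannot in general be combined without violating capacities; the paper never combines routings from edge-overlapping pieces --- in the case $\ell_1<\ell_2$ it either routes the inside flows integrally via Proposition~\ref{prop:small-cutset-routing} (using that these bad nodes are \emph{safe}, which is why the case distinction $\ell_1<\ell_2$ versus $\ell_1=\ell_2$ is needed and is absent from your argument) or discards them at a $1/r$ loss, and in the case $\ell_1=\ell_2$ it splits the graph into \emph{vertex-disjoint} pieces $G[U]$ and $G-U$. Your pigeonhole claim that some $u\in\adh(t)$ carries at most $\card{f_2}/r$ flow is also false as stated: the least-loaded vertex of the adhesion is only bounded by the average over at most $r$ vertices of a quantity that can itself be as large as $\card{f_2}$.

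The unsafe case is where the divergence is most serious. LP duality applied to the flow defining safeness does not give a vertex set $C\subsetneq\adh(t)$; it gives an \emph{edge} cut $\bd(U)$ for some $U\subseteq\below(t^\circ)$ with $\capac{\bd(U)}<\frac{1}{4r}\bx(U)$. The paper does not perform surgery on the decomposition at all: it refines $U$ by a min-cut computation so that $\adh(t)\subseteq U$ forces $\subtree(t)\subseteq U$ (Lemma~\ref{lem:belowU}), splits into the vertex-disjoint instances $G[U]$ and $G-U$, shows good nodes stay good (Lemma~\ref{lem:remains-good}) and that the maximum unsafe parent adhesion strictly drops on the $U$-side (Lemma~\ref{lem:inst1}), and charges the lost flow $\card{g}\le\frac{1}{r}\card{f_1}$ against the improved exponent on that side; the induction is on $\ell_1+\ell_2+\card{V(G)}$, since on the $G-U$ side neither $\ell_1$ nor $\ell_2$ decreases. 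Your bag-augmentation surgery would have to control the width and re-certify safety/goodness of all ancestors, and you concede you do not know how to show the global decrease --- that missing argument is the theorem's crux, not routine bookkeeping. Finally, your base case is needlessly involved and has a hole: with $\ell_1=\ell_2=0$ the only bad node is the root, so \emph{every} flow path meets $\bags(t_0)$ and Proposition~\ref{prop:small-cutset-routing} with $S=\bags(t_0)$ and $\alpha=1$ already gives $\card{f}/(36r)$; your centroid node $t^\ast$ could be the root itself, for which no safeness certificate exists, and the extra ``largest class of $S$'' splitting both loses an unneeded factor of $r$ and is not how the constant $144$ arises.
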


We start with a bit of preprocessing. If $\card{f} = 0$, then we return an empty routing.
Otherwise, the root node of $\sT$ is always unsafe and bad, and the integers~$\ell_1$ and $\ell_2$ are well-defined.
By considering every connected component of $G$ independently (with inherited tree decomposition from $(\sT,\bags)$),
we assume that~$G$ is connected; note that in this step all safe or good adhesions remain safe or good for every connected component.
Furthermore, we delete from $(\sT,\bags)$ all nodes with empty bags; note that the connectivity of $G$ ensures that
the nodes with non-empty bags induce a connected subtree of $\sT$.
In this step, the root of~$\sT$ may have moved to a different node (the topmost node
with non-empty bag), but the parent-children relation in the tree remains unchanged.

Once $G$ is connected and no bag is empty, 
the only empty parent adhesion is the one for the root node.
We prove Theorem~\ref{thm:induction-main} by induction on $\ell_1 + \ell_2 + \card{V(G)}$.

\mypar{Base case.}
In the base case, we assume that $\ell_1 = \ell_2 = 0$. Since every parent adhesion of a non-root node is non-empty,
that implies that the only bad node is the root $t_0$, that is, 
every flow path in $f$ passes through~$\bags(t_0)$, which is of size at most $r$. 
By applying Proposition~\ref{prop:small-cutset-routing} with $S = \beta(t_0)$ and $\alpha=1$,
we construct an integral routing of size at least $\frac{1}{36r} |f| \geq \frac{1}{144r^3}|f|$.

In the inductive step, we consider two cases, depending on whether $0 \leq \ell_1 < \ell_2$ or $0 < \ell_1 = \ell_2$.

\mypar{Inductive step when $0 \leq \ell_1 < \ell_2$.}
Let $\{t_1,t_2,\ldots,t_p\}$ be the topmost bad nodes of $\sT$ with parent adhesions of size $\ell_2$, that is, 
it is a minimal set of such bad nodes such that for every bad node $t$ with parent adhesion of size $\ell_2$,
there exists an $i\in\{1,\hdots,p\}$ with $t \preceq t_i$.
For $1 \leq i \leq p$, let $f^{\mathrm{inside}}_i$ be the subflow of $f$ consisting of all paths that are completely contained in
$G[\alpha(t_i)]$. Furthermore, since $\ell_1 < \ell_2$, the node $t_i$ is safe; let~$g_i$ be the corresponding flow, i.e., 
a flow that routes $\frac{1}{4r} x(v)$ from every $v \in \subtree(t_i)$ to $\adh(t_i)$ in $G(t_i)$.
By applying Proposition~\ref{prop:small-cutset-routing}, there is an integral routing~$\sP_i$ in $G(t_i)$
that routes at least ${1 \over 144 r^2} \card{f^{\mathrm{inside}}_i}$ pairs.
Since the subgraphs $\set{G(t_i) \colon 1 \leq i \leq p}$ are edge-disjoint,
we get an integral routing $\sP \coloneqq \bigcup_i \sP_{i}$ of size at least
${1 \over 144 r^2} \sum_{i=1}^p \card{f^\mathrm{inside}_i}$. 

If $\sum_{i=1}^p \card{f^{\mathrm{inside}}_i} > \frac{1}{r} \card{f}$, then we can return the routing $\sP$ as the desired
solution. Otherwise, we drop the flows $f^{\mathrm{inside}}_i$, that is, consider a flow
$f' \coloneqq f - \sum_{i=1}^p f^{\mathrm{inside}}_i$.
Clearly, $\card{f'} \geq (1-\frac{1}{r}) \card{f}$.
Furthermore, by definition of~$f^\mathrm{inside}_i$, every node $t_i$ is good with respect to $f'$.
Since deleting a flow path cannot turn a good node into a bad one nor a safe node into an unsafe one, and all descendants 
of a good node are also good, we infer that every unsafe node with respect to $f'$ has parent adhesion of size at most $\ell_1$,
while every bad node with respect to $f'$ has parent adhesion of size \emph{less} than $\ell_2$. 
Consequently, by induction hypothesis we obtain an integral routing of size at least
$$\frac{1}{144r^3} \left(1-\frac{1}{r}\right)^{\ell_1 + \ell_2 - 1} \card{f'} \geq \frac{1}{144r^3} \left(1-\frac{1}{r}\right)^{\ell_1 + \ell_2} \card{f}.$$

\mypar{Inductive step when $0 < \ell_1 = \ell_2$.}
In this case, we pick a node $t^\circ$ to be the lowest node of $\sT$ that 
is unsafe and has parent adhesion of size $\ell_1$.
By the definition of an unsafe node and Menger's theorem, there exists a set $U \subseteq \below(t^\circ)$
such that $\capac{\bd(U)} < \frac{1}{4r} \bx(U)$.
With a bit more care, we can extract a set $U$ with one more property:
\begin{lemma}\label{lem:belowU}
In polynomial time we can find a set $U \subseteq \below(t^\circ)$ for which
\begin{enumerate}
\item $\capac{\bd(U)} < \frac{1}{4r} \bx(U)$;\label{belowU:capac}
\item for every non-root node $t$, if $\adh(t) \subseteq U$, then $\subtree(t) \subseteq U$.\label{belowU:adh}
\end{enumerate}
\end{lemma}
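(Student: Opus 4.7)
The plan is to first apply Menger's theorem (equivalently, max-flow min-cut) to the unsafety of $t^\circ$ to produce an initial set $U_0 \subseteq \below(t^\circ)$ satisfying property~\ref{belowU:capac}, and then iteratively enforce property~\ref{belowU:adh} by a monotone closure operation while preserving property~\ref{belowU:capac}. The key structural fact I would use throughout is Lemma~\ref{lem:adhesion-cutset}.

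For the first step, I would build an auxiliary flow network $H$ on $\subtree(t^\circ) \cup \{s^\ast, t^\ast\}$ by keeping the edges of $G(t^\circ)$ with their capacities, adding an arc from $s^\ast$ to each $z \in \subtree(t^\circ)$ of capacity $\tfrac{1}{4r}\bx(z)$, and adding an arc from each $v \in \adh(t^\circ)$ to $t^\ast$ of infinite capacity. Any $s^\ast$-$t^\ast$ flow in $H$ of value $\tfrac{1}{4r}\bx(\subtree(t^\circ))$ would give a multicommodity flow in $G(t^\circ)$ witnessing that $t^\circ$ is safe. Since $t^\circ$ is unsafe, the maximum $s^\ast$-$t^\ast$ flow is strictly smaller, so by max-flow min-cut there is a min-cut whose $s^\ast$-side intersected with $V(G)$ is a set $U_0$ that avoids $\adh(t^\circ)$ (infinite arcs cannot be cut) and thus sits in $\below(t^\circ)$. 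Writing out the cut value yields $\tfrac{1}{4r}(\bx(\subtree(t^\circ)) - \bx(U_0)) + \capac{\bd_{G(t^\circ)}(U_0)} < \tfrac{1}{4r}\bx(\subtree(t^\circ))$, i.e., $\capac{\bd_{G(t^\circ)}(U_0)} < \tfrac{1}{4r}\bx(U_0)$, and Lemma~\ref{lem:adhesion-cutset} guarantees $\bd_{G(t^\circ)}(U_0) = \bd_G(U_0)$ since no edge leaves $\below(t^\circ)$ to outside $\subtree(t^\circ)$.

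For the closure, I would repeat the following while there exists a non-root $t$ with $\adh(t) \subseteq U$ and $\subtree(t) \not\subseteq U$: set $U := U \cup \subtree(t)$. Setting $W := \subtree(t) \setminus U$, the hypothesis $\adh(t) \subseteq U$ gives $W \subseteq \below(t)$, and Lemma~\ref{lem:adhesion-cutset} then implies every edge incident to $W$ stays inside $\subtree(t)$ with its other endpoint in $\adh(t) \cup (\below(t) \setminus W) \subseteq U$. Hence each such edge is internal to $U \cup W$, which gives $\bd(U \cup W) \subseteq \bd(U)$ and $\bx(U \cup W) \geq \bx(U)$, preserving property~\ref{belowU:capac}. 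Each iteration strictly enlarges $U$, so the loop runs in polynomial time.

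The main obstacle is to verify that the closure never escapes $\below(t^\circ)$, and this is where the proof is most delicate. It reduces to two applications of the standard tree-decomposition property that $\{s \in V(\sT) : v \in \bags(s)\}$ is connected in $\sT$. A case analysis on the position of $t$ relative to $t^\circ$ (the LCA of $t$ and $t^\circ$ being either $t^\circ$, or $t$, or a strict ancestor of both) shows that any triggering non-root $t$ is forced to be a strict descendant of $t^\circ$: in every other configuration, any $v \in \adh(t) \cap \subtree(t^\circ)$ must lie in $\bags(\parent(t^\circ))$ by connectivity, hence in $\adh(t^\circ)$, which is outside $U$; this forces $\adh(t) = \emptyset$, a contradiction. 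For a strict descendant $t \prec t^\circ$, the same connectivity property along the path $\parent(t^\circ), t^\circ, \dots, t, \dots, s$ (where $v \in \bags(s)$ for some $s \preceq t$) shows that any $v \in \subtree(t) \cap \adh(t^\circ)$ would end up in $\adh(t) \cap \adh(t^\circ)$, contradicting $\adh(t) \subseteq \below(t^\circ)$. Therefore $\subtree(t) \subseteq \below(t^\circ)$ and $U$ stays in $\below(t^\circ)$ throughout the closure.
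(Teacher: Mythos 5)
Your proof is correct, and its first half (the auxiliary network with super-source arcs of capacity $\frac{1}{4r}\bx(v)$, infinite arcs from $\adh(t^\circ)$ to the sink, and reading property~\ref{belowU:capac} off a minimum cut) is exactly the paper's construction. Where you diverge is in how property~\ref{belowU:adh} is obtained: the paper does not post-process at all, but argues that the minimum cut \emph{already} satisfies it --- if $\adh(t) \subseteq U$ then $U' = U \cup \subtree(t)$ has $\bd_G(U') \subseteq \bd_G(U)$ by Lemma~\ref{lem:adhesion-cutset}, so by minimality the two cuts have equal capacity, and connectivity of $G$ then forces $U' = U$. You instead run an explicit monotone closure, repeatedly replacing $U$ by $U \cup \subtree(t)$, and verify that each step preserves property~\ref{belowU:capac} (same boundary-containment argument, now used to show the capacity cannot increase while $\bx(U)$ cannot decrease) and terminates in polynomially many steps. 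Your route is slightly longer but more robust: it works starting from \emph{any} set satisfying property~\ref{belowU:capac}, not just a minimum cut, and it does not need the connectivity of $G$ for the closure itself (you do still implicitly use the preprocessing guarantee that non-root parent adhesions are non-empty when you declare $\adh(t) = \emptyset$ a contradiction). You also spell out, via the connectivity of $\{s : v \in \bags(s)\}$ in $\sT$, why any triggering node $t$ must be a strict descendant of $t^\circ$ with $\subtree(t) \subseteq \below(t^\circ)$ --- a point the paper asserts in one line inside the proof of Lemma~\ref{lem:inst1}-adjacent reasoning ("Since $\adh(t) \subseteq U \subseteq \below(t^\circ)$, we have $t \preceq t^\circ$, $t \neq t^\circ$, and $\subtree(t) \subseteq \below(t^\circ)$") without proof; making it explicit is a welcome addition, and your case analysis is sound. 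One cosmetic slip: when you say every edge incident to $W$ has its other endpoint in $\adh(t) \cup (\below(t) \setminus W) \subseteq U$, the endpoint may of course also lie in $W$ itself; your subsequent sentence ("internal to $U \cup W$") states the correct conclusion, so nothing breaks.
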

\begin{proof}
Consider an auxiliary graph $G'$, obtained from $G[\subtree(t^\circ)]$ by adding a super-source $s^\ast$, linked for every $v \in \subtree(t^\circ)$ by an arc $(s^\ast,v)$ of capacity $\frac{1}{4r} \bx(v)$, and a super-sink $t^\ast$, linked
for every $v \in \adh(t^\circ)$ by an arc $(v,t^\ast)$ of infinite capacity. Let~$U$ be such a set that $\bd(U \cup \{s^\ast\})$
is a minimum $s^\ast$-$t^\ast$ cut in this graph.
Clearly, since~$U$ is unsafe,
$\capac{\bd_{G'}(U \cup \{s^\ast\})} < \frac{1}{4r} \bx(\subtree(t^\circ)) = \capac{\bd_{G'}(s^\ast)}$, so $U \neq \emptyset$.
Also, $U \subseteq \below(t^\circ)$, as each node in $\adh(t^\circ)$ is connected to $t^\ast$ with an infinite-capacity arc.

We claim that $U$ satisfies the desired properties. The first property is immediate:
$$\capac{\bd_G(U)} = \capac{\bd_{G'}(U \cup \{s^\ast\})} - \frac{1}{4r} \bx(\subtree(t^\circ) \setminus U) < \frac{1}{4r} \left(\bx(\subtree(t^\circ)) - \bx(\subtree(t^\circ) \setminus U) \right) = \frac{1}{4r} \bx(U).$$

For the second property, pick a non~root node $t$ with $\adh(t) \subseteq U$.
Since $\adh(t) \subseteq U \subseteq \below(t^\circ)$,
 we have $t \preceq t^\circ$, $t \neq t^\circ$, and $\subtree(t) \subseteq \below(t^\circ)$.
Let $U' \coloneqq U \cup \subtree(t)$. By Lemma~\ref{lem:adhesion-cutset},
$\bd_G(U') \subseteq \bd_G(U)$, and hence $\bd_{G'}(U' \cup \{s^\ast\}) \subseteq \bd_{G'}(U \cup \{s^\ast\})$.
However, since $\bd_{G'}(U \cup \{s^\ast\})$ is a minimum cut, we have actually $\bd_G(U') = \bd_G(U)$.
Since~$G$ is connected, this implies that $U = U'$, and thus $\subtree(t) \subseteq U$.
As the choice of~$t$ was arbitrary, $U$ satisfies the second property.
\end{proof}

Using the cut $U$, we split the graph $G$ and the flow $f$ into two pieces as follows.
Let $G_1 = G[U]$ and $G_2 = G-U$.
Let $f_i$ be the restriction of $f$ to $G_i$, i.e., the flow consisting of only flow paths that are contained in $G_i$.
Let $\bx_i$ be the marginals of $f_i$ and let $\Pairs_i$ be the subset of $\Pairs$ consisting of all pairs $(s, t)$ such that $\set{s, t} \subseteq V(G_i)$; note that $x_i(s) = x_i(t)$ for each pair $(s, t) \in \Pairs_i$ and thus $(f_i, \bx_i)$ is a fractional routing for the instance $(G_i, \Pairs_i)$. Let $(\sT,\bags_1)$ and $(\sT,\bags_2)$ be the restriction of $(\sT,\bags)$
to the vertices of $G_1$ and~$G_2$, respectively; we define mappings $\adh_i$, $\subtree_i$, and $\below_i$
naturally.
In what follows, we consider separately two instances
$\instance_i \coloneqq \left<(G_i, \Pairs_i), (f_i, \bx_i), (\sT,\beta_i)\right>$ for $i=1,2$.

An important observation is the following:
\begin{lemma}\label{lem:remains-good}
Every node $t \in V(\sT)$ that is good
in the original instance (i.e., as a node of $\sT$,
with respect to $(f,\bx)$) is also good in $\instance_i$ with respect to $(f_i,\bx_i)$.
\end{lemma}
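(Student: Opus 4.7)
The plan is to prove the contrapositive: we show that if $t$ is bad in $\instance_i$ with respect to $(f_i, \bx_i)$, then $t$ is already bad in the original instance with respect to $(f, \bx)$. By definition, $t$ being bad in $\instance_i$ yields a flow path $p$ in the support of $f_i$ that is completely contained in $G_i[\below_i(t)]$ and has (at least) one endpoint in $\subtree_i(t)$. The key observation is that, by the very construction of $f_i$ as the restriction of $f$ to $G_i$, the path $p$ lies in $V(G_i)$ and also belongs to the support of $f$; so $p$ is available as a candidate witness for badness of $t$ in the original instance.

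The remaining step is to verify that $p$ indeed witnesses the badness of $t$ with respect to $(f,\bx)$, i.e., that $p$ has an endpoint in $\subtree(t)$ and avoids $\adh(t)$. Both facts are immediate from the compatibility of the restricted notation with the original one: by defining $(\sT,\bags_i)$ as the pointwise restriction of $(\sT,\bags)$ to $V(G_i)$, we have $\subtree_i(t) = \subtree(t) \cap V(G_i) \subseteq \subtree(t)$, which handles the endpoint, and $\adh_i(t) = \adh(t) \cap V(G_i)$. For the avoidance, suppose for contradiction that $p$ met $\adh(t)$ at some vertex $v$; since $p \subseteq V(G_i)$, we would have $v \in \adh(t) \cap V(G_i) = \adh_i(t)$, contradicting that $p$ stays within $G_i[\below_i(t)]$. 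Hence $p$ is entirely contained in $G[\below(t)]$, which contradicts the assumption that $t$ was good in the original instance.

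This is essentially a definitional unpacking, so there is no real obstacle. The only point that warrants explicit mention is the consistency of the restricted decomposition notation, namely that $\subtree_i(t)$, $\adh_i(t)$, and $\below_i(t)$ coincide with $\subtree(t) \cap V(G_i)$, $\adh(t) \cap V(G_i)$, and $\below(t) \cap V(G_i)$ respectively; once this is observed, the set inclusions needed for the lift go through transparently. Note also that the argument does not use the specific structure of the cut $U$ from Lemma~\ref{lem:belowU} — only that $G_1$ and $G_2$ are induced subgraphs of $G$ and that $(\sT,\bags_i)$ is the inherited tree decomposition — so the same proof applies symmetrically to both $i=1$ and $i=2$.
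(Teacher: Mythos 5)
Your proof is correct and is essentially the paper's argument: the paper's one-line proof likewise rests on the observation that every flow path of $f_i$ is a flow path of $f$ and hence, by goodness of $t$ in the original instance, must meet the parent adhesion. You merely phrase it contrapositively and spell out the (true and worth noting) compatibility $\subtree_i(t)=\subtree(t)\cap V(G_i)$, $\adh_i(t)=\adh(t)\cap V(G_i)$, which the paper leaves implicit.
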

\begin{proof}
Note that every flow path in $f_i$ is also present in $f$, and therefore intersects the parent adhesion of $f$ if $t$ is a good
node in the original instance.
\end{proof}
Consequently, every node $t \in V(\sT)$ with $|\adh(t)| > \ell_2$ is good in the instance $\instance_i$,
and the maximum size of a parent adhesion of a bad node in instance $\instance_i$ is at most $\ell_2$.
Hence, both $\instance_1$ and $\instance_2$ satisfy the assumptions of Theorem~\ref{thm:induction-main} with
not larger values of $\ell_1$ and $\ell_2$.
Furthermore, note that $|V(G_i)| < |V(G)|$ for $i=1,2$.

For $\instance_2$, the above reasoning allows us to
simply just apply inductive step,
obtaining an integral routing~$\sP_2$ of size at least

\begin{equation}\label{eq:inst2}
\card{\sP_2} \geq {1 \over 144 r^3} \left(1 - {1 \over r} \right)^{\ell_1 + \ell_2} \cdot \card{f_2} \enspace .
\end{equation}

For $\instance_1$, we are going to obtain a larger routing via an inductive step with better bounds.
\begin{lemma}\label{lem:inst1}
The size of the largest parent adhesion of an unsafe note in $\instance_1$ is \emph{less} than $\ell_1$.
\end{lemma}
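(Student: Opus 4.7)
Plan. The goal is to show that every non-root node $t$ of the tree decomposition $(\sT,\bags_1)$ of $\instance_1$ that is unsafe with respect to $(f_1,\bx_1)$ satisfies $|\adh_1(t)|<\ell_1$; equivalently, I will prove the contrapositive: if $|\adh_1(t)|\ge\ell_1$ then $t$ is safe in $\instance_1$. First, a locating observation: because $U\subseteq\below(t^\circ)$ and $U\cap\adh(t^\circ)=\emptyset$, the connectivity property of the tree decomposition forces every vertex of $U$ to appear only in bags $\bags(s)$ with $s\preceq t^\circ$. In particular $\adh_1(t^\circ)=\adh(t^\circ)\cap U=\emptyset$, so any node $t$ with $|\adh_1(t)|\ge 1$ lies strictly below $t^\circ$ in $\sT$. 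The argument then splits according to whether $\adh(t)\subseteq U$ or not.

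Case A: $\adh(t)\subseteq U$. The second conclusion of Lemma~\ref{lem:belowU} yields $\subtree(t)\subseteq U$, so $G_1(t)$ and $G(t)$ coincide as graphs, $\subtree_1(t)=\subtree(t)$, and $\adh_1(t)=\adh(t)$; moreover $\bx_1(z)\le\bx(z)$ for every $z\in\subtree(t)$. Hence any multicommodity flow witnessing safety of $t$ in the original instance also witnesses safety of $t$ in $\instance_1$. Contrapositively, if $t$ were unsafe in $\instance_1$ it would be unsafe in the original, but this contradicts either the maximality of $\ell_1$ (ruling out $|\adh(t)|>\ell_1$) or the lowestness of $t^\circ$ (ruling out $|\adh(t)|=\ell_1$ for $t\prec t^\circ$). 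So in this case $|\adh_1(t)|=|\adh(t)|<\ell_1$ whenever $t$ is unsafe in $\instance_1$.

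Case B: $\adh(t)\not\subseteq U$. Here $|\adh_1(t)|<|\adh(t)|$, so the conclusion is immediate when $|\adh(t)|\le\ell_1$. Otherwise $|\adh(t)|>\ell_1$ and, by maximality of $\ell_1$, the node $t$ is safe in the original. Assuming for contradiction that $t$ is unsafe in $\instance_1$, extract via Menger's theorem in $G_1(t)$ a witness $W\subseteq\below_1(t)$ with $\capac{\bd_{G_1(t)}(W)}<\bx_1(W)/(4r)$. I would choose $W$ canonically as a maximal minimum cut in an auxiliary graph built from $G_1(t)$ in the same spirit as the auxiliary $G'$ in the proof of Lemma~\ref{lem:belowU}, so that $W$ inherits an analog of property~2. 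The plan is then to contradict the minimality of $U$ as an $s^\ast$-$t^\ast$ cut in $G'$ by exhibiting a strictly cheaper alternative, namely $U^\dagger=(U\setminus\subtree_1(t))\cup W\subseteq\below(t^\circ)$. Decomposing $\bd_G(U^\dagger)$ across the four blocks $U\setminus\subtree(t)$, $W$, $\subtree_1(t)\setminus W$, $V\setminus U$ and using the separation property of $\adh(t)$ (edges leaving $\below(t)$ stay inside $\subtree(t)$), the cross-term between $W$ and $\subtree_1(t)\setminus W$ reduces exactly to $\bd_{G_1(t)}(W)$, on which the unsafety bound bites.

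Main obstacle. The delicate step is Case B: a direct, one-shot combination of the minimality of $U$, the bound $\capac{\bd_G(U)}<\bx(U)/(4r)$, and the unsafety witness produces only inequalities whose right-hand side can be non-positive, and so cannot contradict minimality by itself. The fix is to use the canonicity of $W$: the property~2 analog satisfied by $W$ lets me uncross $W$ with $U$ (or with $\subtree(t)$) and transfer the ``outside-$U$'' slack in the trial cut $U^\dagger$ onto edges whose capacity is controlled by the external adhesion structure of $t$, exploiting the contrapositive observation that $\adh(t)\not\subseteq U$ forces $\subtree(t)\not\subseteq U$. I expect the bulk of the technical work to live in this uncrossing step.
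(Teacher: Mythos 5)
Your Case A is sound: for $|\adh(t)|=\ell_1$ it is exactly the paper's argument via the lowest choice of $t^\circ$ and property~2 of Lemma~\ref{lem:belowU}, and for $|\adh(t)|>\ell_1$ with $\adh(t)\subseteq U$ your direct transfer of the safety witness through $\subtree(t)\subseteq U$ also works. The genuine gap is Case B with $|\adh(t)|>\ell_1$. There you try to transfer \emph{safety} from the original instance to $\instance_1$, and this is precisely the direction that does not transfer: the witnessing flow for $t$ lives in $G(t)$, which may use vertices and edges outside $U$ and may deliver flow to $\adh(t)\setminus U$. Your proposed fix --- extracting a canonical Menger witness $W$ inside $G_1(t)$ and contradicting the minimality of the cut defining $U$ via the trial cut $(U\setminus\subtree_1(t))\cup W$ --- is not carried out, and, as you yourself concede, the resulting inequality can have a non-positive right-hand side: moving $\subtree_1(t)\setminus W$ to the sink side of the auxiliary graph adds the source-arc capacities $\frac{1}{4r}\bx(v)$ of all those vertices to the trial cut, while the unsafety of $t$ in $\instance_1$ is measured against the \emph{restricted} marginals $\bx_1$ and the sink set $\adh_1(t)$, neither of which is tied to the quantities in the auxiliary graph at $t^\circ$. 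Nothing in the proposal controls these terms, so no contradiction with the minimality of $U$ follows, and the promised uncrossing step is exactly the part that is missing.

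The idea you are missing, and which makes this case immediate in the paper, is that the lemma is only invoked in the inductive step with $\ell_1=\ell_2$. Hence $|\adh(t)|>\ell_1$ means $|\adh(t)|>\ell_2$, so $t$ is \emph{good} in the original instance (every bad node has parent adhesion of size at most $\ell_2$). Goodness, unlike safety, survives passing to $\instance_1$ (Lemma~\ref{lem:remains-good}: every flow path of $f_1$ is a flow path of $f$), and a good node is safe (the remark following the definition of a good node, applied inside $\instance_1$). This disposes of every node with $|\adh(t)|>\ell_1$ regardless of whether $\adh(t)\subseteq U$, leaving only $|\adh(t)|=|\adh_1(t)|=\ell_1$, i.e., your Case A. Without invoking $\ell_1=\ell_2$ and goodness, your Case B has no complete argument, so the proof as written does not establish the lemma.
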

\begin{proof}
Assume the contrary, let $t \in V(\sT)$ be an unsafe adhesion with $|\adh_1(t)| \geq \ell_1$.
If $|\adh(t)| > \ell_1$, then $t$ is good in the original instance, and by Lemma~\ref{lem:remains-good}
it remains good in $\instance_1$. Consequently, $|\adh(t)| = |\adh_1(t)| = \ell_1$; in particular,
$\adh(t) = \adh_1(t) \subseteq U$.

By Lemma~\ref{lem:belowU}, property~\ref{belowU:adh}, we have $\subtree(t) \subseteq U$.
Consequently, $t$ is safe in the original instance if and only if it is safe in $\instance_1$.
Since $t \preceq t^\circ$, $t \neq t^\circ$, but $|\adh(t)| = \ell_2$, by the choice of $t^\circ$
it holds that~$t$ is safe in the original instance, a contradiction.
\end{proof}
Lemma~\ref{lem:inst1} allows us to apply the inductive step to $\instance_1$ and obtain an integral routing $\sP_1$
of size at least

\begin{equation}\label{eq:inst1}
\card{\sP_1} \geq {1 \over 144 r^3} \left(1 - {1 \over r} \right)^{\ell_1-1 + \ell_2} \cdot \card{f_1} \enspace .
\end{equation}

Let us now estimate the amount of flow lost by the separation into $\instance_1$ and~$\instance_2$, i.e., 
$g = f - f_1 - f_2$. As every flow path in $g$ passes through $\bd(U)$, we have
$\card{g} \leq \capac{\bd(U)} < \frac{1}{4r} \bx(U)$.
Since $\card{f_1} + \card{g} \geq \frac{1}{2} \bx(U)$ (no flow path in $f_2$ originates in $U$), we have
that
$\card{g} \leq \frac{1}{4r} \cdot 2 \cdot  \left(\card{f_1} + \card{g}\right).$
Hence,
\begin{equation}\label{eq:instloss}
\card{g} \leq \frac{1}{2r} \cdot \left(1-\frac{1}{2r}\right)^{-1} \card{f_1} \leq \frac{1}{r} \card{f_1} \enspace .
\end{equation}
By putting up together~\eqref{eq:inst2}, \eqref{eq:inst1}, and~\eqref{eq:instloss}, we obtain that
\begin{align*}
\card{\sP_1} + \card{\sP_2} &\geq \frac{1}{144r^3} \left(1-\frac{1}{r}\right)^{\ell_1 + \ell_2} \left(\card{f_2} + \left(1-\frac{1}{r}\right)^{-1} \card{f_1}\right) \\
    & \geq \frac{1}{144r^3} \left(1-\frac{1}{r}\right)^{\ell_1 + \ell_2} \left(\card{f_2} + \card{f_1} + \card{g}\right) = \frac{1}{144r^3} \left(1-\frac{1}{r}\right)^{\ell_1+\ell_2} \card{f} \enspace .
\end{align*}
This concludes the proof of Theorem~\ref{thm:induction-main}. 
Since $\ell_1,\ell_2 \leq r$, while $(1-\frac{1}{r})^{2r} = \Omega(1)$, Theorem~\ref{thm:induction-main}
immediately implies the promised $\Oh(r^3)$-approximation algorithm.

\begin{remark}
We conclude with observing that the improved approximation ratio of $\Oh(r^3)$ directly translates to the more general setting
of $k$-sums of graph from some minor closed family $\mathcal{G}$, as discussed in~\cite{cns-tw}. That is, if we are able to $\alpha$-approximate \edp{} with congestion $\beta$
in graphs from $\mathcal{G}$, we can have $\Oh(\alpha r^5)$-approximation algorithm with congestion $(\beta+3)$ in graphs admitting a tree decomposition of maximum adhesion size at most~$r$,
and the torso of every bag being from the class $\mathcal{G}$.

To see this, observe that the only place when our algorithm uses that the \emph{bags} are of bounded size (as opposed to \emph{adhesions}) is the base case, where all flow paths pass through the bag $\bags(t_0)$ of the root node $t_0$.
However, in this case we can proceed exactly as Chekuri \etal~\cite{cns-tw}: using the flow paths, move the terminals to $\bags(t_0)$, replace connected components of $G-\bags(t_0)$ with their $(r^2,2)$-sparsifiers, and apply the algorithm
for the class $\mathcal{G}$. In addition to the $\Oh(r^3)$ approximation factor of our algorithm,
the application of the algorithm for~$\mathcal{G}$ incurs an approximation ratio of $\alpha$ and congestion of $\beta$, the use of sparsifiers adds a factor of $r^2$ to the approximation ratio
and an additive constant $+1$ to the congestion, while the terminal move adds an additional amount of $2$ to the final congestion.
\end{remark}


\section{Algorithm for \ndp in Bounded Pathwidth Graphs}
\label{sec:ndp}

In this section we develop an $\Oh(r^3)$-approximation algorithm for \ndp in graphs of \emph{pathwidth} less than~$r$.
We follow the outline of the \edp algorithm from the previous section, with few essential changes.

Most importantly, we can no longer use Proposition~\ref{prop:small-cutset-routing}, as it refers to edge disjoint paths,
and the proof of its main ingredient by Chekuri \etal~\cite{CKS-sqrtn} relies on a clustering technique that stops to work for node disjoint paths.
We fix this issue by providing in Sect.~\ref{ssec:ndp-small-cutset}
a node-disjoint variant of Proposition~\ref{prop:small-cutset-routing},
using the more involved clustering approach of Chekuri \etal~\cite{cks-stoc05}.

Then, in Sect.~\ref{ssec:ndp} we revisit step-by-step the arguments for \edp, pointing out remaining differences.
We remark that the use of pathwidth instead of treewidth is only essential in the inductive step for the case $\ell_1 < \ell_2$:
if we follow the argument for \edp for bounded-treewidth graphs, the graphs $G(t_i)$ may not be node disjoint (but they are
edge disjoint), breaking the argument. Note that for bounded pathwidth graphs, there is only one such graph considered,
and the issue is nonexistent.

\subsection{Routing to a small adhesion in a node-disjoint setting}\label{ssec:ndp-small-cutset}

In this section we prove the following statement.

\begin{proposition}\label{prop:ndp-small-cutset}
	Let $(G, \Pairs)$ be an instance of \ndp and let $(f, \bx)$ be a feasible fractional solution for the instance. Suppose that there is also a second (feasible, i.e., respecting node capacities) flow that routes at least $x(v)/\alpha$ units of flow for each $v$ to some set $S \subseteq V$, where $\alpha \geq 1$. Then there is an integral routing of $\Omega(\card{f} / (\alpha \card{S}))$ pairs.
\end{proposition}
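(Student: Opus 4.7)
The plan is to adapt the proof of Proposition~\ref{prop:small-cutset-routing} from the edge-disjoint to the node-disjoint setting, replacing the relatively simple grouping argument of Chekuri--Khanna--Shepherd~\cite{CKS-sqrtn} with the more delicate node-well-linked decomposition of~\cite{cks-stoc05}, which is exactly what the authors signal in the preamble to Sect.~\ref{ssec:ndp-small-cutset}.

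First, I would compose the two flows. Starting from $f$ (which routes $\bx(v)$ units for each terminal $v$) and concatenating with the second flow (which moves $\bx(v)/\alpha$ units from every vertex to $S$), one obtains, up to a constant factor scaling to absorb the concatenation, a single node-capacity-respecting flow $\phi$ that delivers $\Omega(\bx(v)/\alpha)$ units from each terminal directly into $S$. The total delivered mass is $\Omega(\card{f}/\alpha)$, so pigeonholing over the $\card{S}$ sinks yields a single vertex $s^* \in S$ receiving at least $\Omega(\card{f}/(\alpha\card{S}))$ units. Reinterpreting $\phi$ restricted to paths ending at $s^*$, we get a fractional single-sink node-flow from a weighted collection of terminals (with weights $w(v)$ summing to $\Omega(\card{f}/(\alpha\card{S}))$) into $s^*$.

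The core step is to extract from this weighted terminal set a \emph{node-well-linked} subset $T^*$ whose weight is still $\Omega(\card{f}/(\alpha\card{S}))$. Here the edge-disjoint argument breaks: the flow-path clustering of~\cite{CKS-sqrtn} does not respect node capacities. Instead, I would iteratively apply the balanced node-separator step of~\cite{cks-stoc05}: as long as the current weighted terminal set admits a balanced vertex separator whose size is much smaller than the weight on either side, discard the smaller side and recurse, losing only a constant fraction of weight per iteration. Termination yields a weighted terminal set that is node-well-linked with respect to the induced subgraph of $G$, with total weight still $\Omega(\card{f}/(\alpha\card{S}))$. From this weighted set one can extract an integral terminal set $T^*$ of the same cardinality bound by a standard scaling/selection argument.

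Finally, since $\Pairs$ is a matching on terminals, throwing away at most half of $T^*$ if necessary guarantees that $\Omega(\card{T^*})$ of the original pairs have both endpoints in $T^*$. By node-well-linkedness of $T^*$, any such matching can be routed with node-congestion $\Oh(1)$; scaling down capacities by a constant (or applying a Lovász Local Lemma style rounding, as is standard in this area) produces an honest node-disjoint integral routing of $\Omega(\card{f}/(\alpha\card{S}))$ pairs. The main obstacle I expect is the second step: verifying that the iterative node-separator decomposition loses only constant factors in the node-capacitated regime, and in particular that once the concentration at $s^*$ has been fixed by the pigeonhole step, the balanced separators encountered during grouping can be chosen small enough relative to the surviving weight so that the telescoping loss stays $\Oh(1)$ rather than blowing up with $\card{S}$ or with $\card{V(G)}$.
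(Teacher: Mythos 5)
Your first step (concatenate the two flows, pigeonhole to a single sink $s^\ast \in S$ at a loss of $\card{S}$) matches the paper's opening moves, but the core of your argument does not work. The well-linked decomposition of \cite{cks-stoc05} that you invoke in the second step does \emph{not} lose only a constant factor: the iterative balanced-separator process charges the separator against the surviving weight over up to $\Omega(\log n)$ levels of recursion, and the total loss is inherently $\Omega(\log n)$ (or $\Omega(\log \opt)$) even in very structured graphs --- this is exactly the loss that this paper is trying to avoid (see the discussion of the well-linked framework in the introduction and in Sect.~\ref{sec:wl-decomp}). So the claim that you retain weight $\Omega(\card{f}/(\alpha\card{S}))$ after extracting a node-well-linked set is unsubstantiated, and with the standard decomposition it is false; the obstacle you flag at the end is a real one, not a technicality. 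The final step is also flawed: even given a node-well-linked terminal set, routing a matching on it with node-congestion $\Oh(1)$ is a nontrivial theorem, and passing from a congestion-$c$ routing to a genuinely node-disjoint one is not a matter of ``scaling down capacities by a constant'' or a local-lemma tweak --- integral capacities cannot be scaled, and reducing congestion from a constant to $1$ in general costs far more than a constant fraction of the routed pairs.

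The paper's proof avoids well-linked decompositions altogether and never creates congestion it must later remove. After making terminals degree-$1$ and capacity-$1$, it (i) \emph{symmetrizes} the auxiliary flow so that the two endpoints of each pair send equal amounts to every vertex of $S$ (this uses $f$ itself, at a factor-$3$ cost), (ii) pigeonholes to one sink $u \in S$, (iii) rounds the resulting single-sink flow so that each pair sends either $0$ or $1/2$ to $u$, using the integral single-sink rounding lemma of \cite{CKS-sqrtn} applied separately to the two endpoint classes, (iv) applies the \emph{clustering} Lemma~2.7 of \cite{cks-stoc05} (a different tool from the well-linked decomposition in that paper) to group the surviving terminals into node-disjoint trees of bounded degree each containing between $3$ and $\Oh(1)$ terminals, and then either routes local pairs along their trees or uses the trees to boost each selected terminal's flow to $u$ strictly above $1/2$, and (v) finishes with a single-source/single-sink integral max-flow, pairing up the resulting unit paths at $u$. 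Every loss along the way is a constant factor except the single $\card{S}$ pigeonhole, which is what yields $\Omega(\card{f}/(\alpha\card{S}))$. If you want to salvage your outline, you should replace the well-linked extraction and the congestion-removal step with this symmetrize--pigeonhole--round--cluster--round pipeline.
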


Without loss of generality, we may assume that the terminals of $\Pairs$ are pairwise distinct and of degree
and capacity one: we can always move a terminal
from a vertex $t$ to a newly-created degree-$1$ capacity-$1$ neighbour of $t$.

Let $g$ be the second flow mentioned in the statement. In what follows, we modify and simplify the flows $f$ and $g$ in a number of steps.
We denote by $f_1,f_2,\ldots$ and $g_1,g_2,\ldots$ flows after subsequent modification steps; for the flow $f_i$, by $\bx_i$ we
denote its marginals.

\mypar{Symmetrizing the flow $g$.}
In the first step, we construct flows $f_1$ and $g_1$ with
the following property: for every terminal pair $(s,t) \in \Pairs$, for every $v \in S$,
$g_1$ sends the same amount of flow from $s$ to~$v$ as from $t$ to $v$.
To obtain this goal, we first take the flow $g/3$, and then for every $(s,t) \in \Pairs$ redirect the flow originating at $s$ to first go
along the commodity for the pair $(s,t)$ in flow $f/(3\alpha)$ to the vertex $t$, and then go to $S$ in exactly the same
manner as the flow originating at~$t$ does. It is easy to see that $g_1$ consists of three feasible flows
scaled down by at least $1/3$, thus it is feasible. Finally, we set $f_1 \coloneqq f/3$, so that $g_1$ again sends
$\bx_1(v)/\alpha$ flow from every vertex $v$ to $S$. Note that $\card{f_1} = \card{f}/3$.

\mypar{Restricting to single vertex of $S$.}
To construct flows $f_2$ and $g_2$, pick a vertex $u \in S$ that receives the most flow in $g_1$.
Take $g_2$ to be the flow $g_1$, restricted only to flow paths ending in $u$. Then, restrict $f_1$ to obtain $f_2$ as follows:
for every terminal pair $(s,t) \in \Pairs$, reduce the amount of flow from~$s$ to $t$ to $\alpha$ times the total amount
of flow sent from $s$ to $u$ by $g_2$; note that, by the previous step, it is also equal $\alpha$ times the total
amount of flow sent from $t$ to $u$ by $g_2$. By this step, we maintain the invariant that $g_2$ sends $\bx_2(v)/\alpha$
flow from every $v \in V(G)$, and we have $\card{f_2} \geq \card{f_1}/|S| \geq \card{f}/(3|S|)$.

\mypar{Rounding to a half-integral flow.}
In the next step, we essentially repeat the integral rounding procedure by Chekuri \etal~\cite[Section 3.2]{CKS-sqrtn}.
We use the following operation as a basic step in the rounding.

\begin{lemma}[Theorem 2.1 of~\cite{CKS-sqrtn}]\label{lem:single-rounding}
Let $G$ be a directed graph with edge capacities.
Given a flow~$h$ in~$G$ that goes from set $X \subseteq V(G)$ to a single vertex $u \in V(G)$,
 such that for every $v \in X$ the amount of flow
originating in $v$ is $\bz(v)$, and a vertex $v_0 \in X$ such that $\bz(v_0)$ is not an integer,
one can in polynomial time compute a flow $h'$ in $G$, sending $\bz'(v)$ amount of flow from every $v \in X$ to~$u$,
such that $\card{h'} \geq \card{h}$, $\bz'(v) = \bz(v)$ for every $v \in X$ where $\bz(v)$ is an integer,
and $\bz'(v_0) = \lceil \bz(v_0) \rceil$.
\end{lemma}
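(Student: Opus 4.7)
My plan is to augment $h$ along a flow in its residual network so as to absorb exactly the fractional slack at $v_0$. Set $\epsilon \coloneqq \lceil \bz(v_0) \rceil - \bz(v_0) \in (0,1)$, and choose the ``target set''
\[
T \;\coloneqq\; \{u\} \;\cup\; \bigl\{v \in X \setminus \{v_0\} \colon \bz(v) \notin \mathbb{Z}\bigr\}.
\]
Let $R$ denote the residual network of $G$ with respect to $h$. I would use a standard polynomial-time max-flow routine to compute a flow of value $\epsilon$ from $v_0$ to $T$ in $R$, and then let $h'$ be the sum of $h$ and this augmenting flow. Feasibility of $h'$ with respect to the capacities of $G$ is immediate from the definition of $R$. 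Since the augmenting flow originates only at $v_0$, we get $\bz'(v_0) = \bz(v_0) + \epsilon = \lceil \bz(v_0) \rceil$. Integer-marginal sources are neither origins nor endpoints of the augmenting flow (they are excluded from $T$), so they only appear as transit vertices, at which incoming and outgoing contributions to the path cancel in the net-outflow bookkeeping; hence their marginals are preserved. Finally, $|h'| - |h|$ equals the amount of augmenting flow terminating at $u$ (the rest merely cancels a corresponding amount of original flow originating at some non-integer source of $T$), and so $|h'| \geq |h|$.

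The key step is to certify that the max flow from $v_0$ to $T$ in $R$ is at least $\epsilon$. By max-flow/min-cut it suffices to bound below the residual capacity of every cut $A \subseteq V(G)$ with $v_0 \in A$ and $A \cap T = \emptyset$. Write $Y \coloneqq A \cap X$; by construction of $T$ every $v \in Y \setminus \{v_0\}$ has $\bz(v) \in \mathbb{Z}$, and $u \notin A$. Summing the flow-conservation identities at every vertex of $A$ cancels internal arcs and yields
\[
h(\delta^+(A)) - h(\delta^-(A)) \;=\; \sum_{v \in Y} \bz(v) \;=\; \bz(v_0) + M,
\]
where $M \coloneqq \sum_{v \in Y \setminus \{v_0\}} \bz(v)$ is a non-negative integer. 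Therefore
\[
\text{residual capacity of the cut } A \;=\; \capac{\delta^+(A)} - h(\delta^+(A)) + h(\delta^-(A)) \;=\; \bigl(\capac{\delta^+(A)} - M\bigr) - \bz(v_0).
\]
Because $G$ has integer capacities, the parenthesized quantity is an integer; call it $M'$. The residual capacity $M' - \bz(v_0)$ must be non-negative, forcing $M' \geq \bz(v_0)$; since $M' \in \mathbb{Z}$ and $\bz(v_0) \notin \mathbb{Z}$, in fact $M' \geq \lceil \bz(v_0) \rceil$, so the cut has residual capacity at least $\epsilon$, as required.

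I expect the main obstacle to be precisely this integrality argument: it leans on the integrality of the edge capacities of $G$ (available in the \ndp setting) and on a judicious choice of $T$, crafted so that excluding $T$ from $A$ automatically places every fractional-marginal source other than $v_0$ in $V(G) \setminus A$, thereby forcing $M$ to be an integer. Once the min-cut bound is secured, the remaining work---invoking max-flow to produce the $\epsilon$-augmenting flow, adding it to $h$, and checking the marginals by routine conservation bookkeeping at origins, transit vertices, and endpoints---is mechanical and clearly runs in polynomial time.
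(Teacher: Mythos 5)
The paper never proves this statement---it is quoted as Theorem~2.1 of~\cite{CKS-sqrtn}---so there is no in-paper argument to compare against; judged on its own, your residual-network augmentation is essentially the standard way to establish it, and your min-cut computation is correct: for a cut $A$ with $v_0 \in A$ and $A \cap T = \emptyset$, the residual capacity equals $\capac{\delta^+(A)} - \sum_{v \in A \cap X} \bz(v)$, and since $v_0$ is the only fractional source inside $A$, integrality of the edge capacities forces this quantity to be at least $\lceil \bz(v_0)\rceil - \bz(v_0)$. You are also right to flag the reliance on integer capacities: the statement as quoted omits it, but it is assumed in \cite{CKS-sqrtn} and holds in the paper's application, and the lemma is false without it.

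There is, however, one genuine gap in the last step. You route $\epsilon$ units from $v_0$ to the set $T$ with no bound on how much may be absorbed at an individual fractional source $v \in T \setminus \{u\}$. If the augmenting flow delivers $a_v > \bz(v)$ units into such a $v$, then $h' = h + g$ has negative net outflow at $v$, i.e., $v$ becomes a sink of $h'$, so $h'$ is not a flow ``from $X$ to $u$'' with marginals $\bz'(v) \ge 0$ as the lemma requires; your remark that the absorbed flow ``merely cancels a corresponding amount of original flow'' is only valid when $a_v \le \bz(v)$. (Concretely: $X=\{v_0,v\}$, $\bz(v_0)=0.5$, $\bz(v)=0.3$, arcs $v_0\to u$, $v\to u$, $v_0\to v$ of capacity $1$; the computed $\epsilon$-flow may send all $0.5$ units along $v_0\to v$.) The fix stays entirely within your approach: add a super-sink $t^\ast$ with an arc $(u,t^\ast)$ of infinite capacity and an arc $(v,t^\ast)$ of capacity exactly $\bz(v)$ for each fractional source $v \neq v_0$, and augment from $v_0$ to $t^\ast$. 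A cut that now keeps such a $v$ on the source side pays $\bz(v)$ on the arc $(v,t^\ast)$, which exactly offsets the $-\bz(v)$ term in your conservation identity, so the same integrality argument still yields a min cut of at least $\epsilon$, while the new arc capacities guarantee $a_v \le \bz(v)$ and hence $\bz'(v) \ge 0$ for every $v$. With this patch the proof is complete.
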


Since a standard reduction reduces flows in undirected node-capacitated graphs to directed edge-capacitated ones%
\footnote{Replace every edge with two infinite-capacity arcs in both directions, and then split every vertex into two vertices,
 connected by an edge of capacity equal to the capacity of the vertex, with all in-edges connected to the first copy,
 and all out-edges connected to the second copy.}, Lemma~\ref{lem:single-rounding} applies also to undirected graphs
with integral node capacities.

Split $g_2$ into two flows $h_s$ and $h_t$: for every terminal pair $(s,t) \in \Pairs$, we put the flow originating in~$s$
into~$h_s$, and the flow originating in $t$ into $h_t$. We perform a sequence of modifications to the flows~$h_s$ and $h_t$,
maintaining the invariant that the same amount of flow originates in $s$ in $h_s$ as in $t$ in $h_t$ for every $(s,t) \in \Pairs$.
Along the process, both $h_s$ and $h_t$ are feasible flows, but $h_s + h_t$ may not be.

In a single step, we pick a terminal pair $(s,t) \in \Pairs$ such that the amount of flow in $h_s$ originating in $s$ is not integral
(and stop if no such pair exists).
We apply Lemma~\ref{lem:single-rounding} separately to $s$ in $h_s$ and to $t$ in $h_t$, obtaining flows $h_s'$ and $h_t'$.
Finally, if for some terminal pair $(s',t')$, the amount of flow originating in $s'$ in~$h_s'$ and in $t'$ in $h_t'$ differ,
we restrict one of the flows so that both route the same amount of flow (being the minimum of the flows routed by $h_s'$ from $s'$ and
by $h_t'$ from $t'$).

Since the rounding algorithm of Lemma~\ref{lem:single-rounding} never modifies a source that already has an integral flow,
this procedure stops after at most $|\Pairs|$ steps. Furthermore, if in one step the flow from $s$ has been increased from~$z$ to $\lceil z \rceil$, the total loss of flow to other pairs is $2(\lceil z \rceil - z)$. Therefore,
if $h_s^\circ$ and~$h_t^\circ$ are the final integral flows, we have $\card{h_s^\circ} + \card{h_t^\circ} \geq (\card{h_s} + \card{h_t})/2
= \card{g_2}/2 = \card{f_2}/\alpha \geq \card{f}/(3\alpha\card{S})$. 
Finally, we define $g_3 := (h_s^\circ + h_t^\circ)/2$; note that $g_3$ is a feasible flow, since both $h_s^\circ$ and $h_t^\circ$ are.

\mypar{Clustering a node-flow-linked set.}
Note that for every $(s,t) \in \Pairs$, the flow~$g_3$ routes either 0 or $1/2$ flow from both $s$ and $t$
to $u$. Let $\Pairs'$ be the set of pairs for which the flow is $1/2$, and let~$X'$ be the set of terminals
in $\Pairs'$. Note that $\card{\Pairs'} = \card{g_3}/2 \geq \card{f}/(6\alpha\card{S})$.

Observe that the set $X'$ is $\frac{1}{4}$-node-flow-linked: using $g_3$, we can find a multicommodity
flow that for every $(a,b) \in X' \times X'$ routes $\frac{1}{4|X|}$ amount of flow from~$a$ to $b$, by routing it
along $g_3$ to $u$ and along reversed $g_3$ from $u$.
This allows us to apply the following clustering result.

\begin{lemma}[Lemma 2.7 of~\cite{cks-stoc05}]\label{lem:node-clustering}
If $X$ is $\alpha$-node-flow-linked in a graph $G$ with unit node capacities, then for any $h \geq 2$ there
exists a forest $F$ in $G$ of maximum degree $\Oh(\frac{1}{\alpha} \log h)$
such that every tree in $F$ spans at least~$h$ nodes from $X$.
\end{lemma}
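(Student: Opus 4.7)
The plan is to translate the node-flow-linked hypothesis into a node-sparsest-cut guarantee and then feed it into a recursive balanced-separator construction. By LP duality applied to the definition of $\alpha$-node-flow-linkedness under unit node capacities, for every set $S \subseteq V(G)$ whose removal splits $X$ into parts $X_1, X_2$ with $\min(|X_1|, |X_2|) \geq |X|/4$, the concurrent flow routing $\alpha / |X|$ between every pair must cross $S$, so the unit node capacities force $|S| = \Omega(\alpha |X|)$. This is the only consequence of the hypothesis that I would use going forward.

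With that cut bound in hand, I would carve the forest one tree at a time. Starting from any remaining terminal $v \in X$, grow a BFS-like ball under unit node weights and stop the first time it captures $\Theta(h)$ terminals. A standard region-growing inequality, combined with the sparsest-cut lower bound above, lets me choose the stopping radius so that the outer vertex boundary of the ball is small relative to the number of captured terminals; this is what ultimately makes the degree bound $\Oh((\log h)/\alpha)$ rather than something polynomial in $h$. After extracting the ball I would delete its vertices from $G$ and repeat on the remaining graph.

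For a single extracted cluster $Y \subseteq X$ of size $\Theta(h)$, I would build the spanning tree by recursion. The induced subproblem is itself node-flow-linked (at a slightly degraded parameter), so the same sparsest-cut bound supplies, at each recursion level, a node separator $S$ of size $\Oh(\alpha \cdot |Y|)$ that splits $Y$ into balanced halves. I would recurse on each half to obtain subtrees of bounded maximum degree, then glue them across $S$. The key accounting is that, across $\Oh(\log h)$ balanced levels, each vertex participates as a ``glue'' vertex only $\Oh(1/\alpha)$ times per level, giving the claimed cumulative maximum degree. Vertex-disjointness of the trees in the final forest is automatic because I delete the extracted cluster before starting the next one.

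The main obstacle, and the only step that does not follow from off-the-shelf region growing, is controlling the \emph{maximum degree} of the Steiner tree inside a single cluster rather than its total size. A naive region-growing or minimum Steiner tree construction bounds the number of Steiner vertices used but can concentrate many incidences at a single glue vertex. The balanced-separator recursion above is designed precisely to spread the glue load evenly across levels, but making this quantitative — and verifying that the balance survives the recursion on an already-cut graph whose node-flow-linkedness parameter has degraded — is the technically delicate part and would require the most care.
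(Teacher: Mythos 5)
First, a point of reference: the paper you are working from does not prove this statement at all --- it is imported verbatim as Lemma~2.7 of~\cite{cks-stoc05} and used as a black box, so there is no in-paper argument to match; your attempt has to stand on its own. It does not. The central gap is that your recursion runs on a bound that points in the wrong direction. From $\alpha$-node-flow-linkedness and unit node capacities you correctly derive that every balanced node separator of $X$ has size $\Omega(\alpha\,|X|)$ --- a \emph{lower} bound. But the engine of your construction is the claim that ``the same sparsest-cut bound supplies, at each recursion level, a node separator $S$ of size $\Oh(\alpha\,|Y|)$ that splits $Y$ into balanced halves.'' Nothing in the hypothesis upper-bounds separators; flow-linkedness is an expansion-type condition and is perfectly consistent with there being no small balanced separators whatsoever (take $G$ a clique containing $X$: the lemma is trivially true there via a star, but your recursion has no separator to recurse on). So the step that is supposed to produce the tree and spread the ``glue'' load has no source.

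Even setting that aside, two further steps are asserted rather than proved and are exactly where the difficulty lives. (i) After you extract a ball and delete the rest of the graph, the claim that ``the induced subproblem is itself node-flow-linked (at a slightly degraded parameter)'' is false in general: the concurrent flow between two captured terminals may route almost entirely outside the ball, and preserving linkedness inside clusters is precisely the delicate charging argument that well-linked-decomposition proofs have to carry out explicitly. (ii) The degree accounting --- that each vertex serves as a glue vertex only $\Oh(1/\alpha)$ times per level, and that subtrees on the two sides of $S$ can be attached through $S$ without concentrating many incidences on a single separator vertex --- is unsubstantiated; note also that a small outer boundary from region growing controls the number of boundary vertices, not the maximum degree of any tree you build inside the region, so it cannot by itself yield the $\Oh(\frac{1}{\alpha}\log h)$ degree bound. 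The known proof in~\cite{cks-stoc05} builds the trees from the flow itself (a grouping/clustering argument charging tree incidences against node capacities along flow paths), rather than from a balanced-separator recursion, and that is the ingredient your outline is missing.
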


Since we can assume that no capacity in $G$ exceeds $|\Pairs|$, we can replace every vertex $v$ of capacity $\capac{v}$
with its $\capac{v}$ copies. To such unweighted graph~$G'$ we
apply Lemma~\ref{lem:node-clustering} for $X'$, $\alpha=1/4$ and $h = 3$, obtaining a forest $F'$;
recall that the terminals $X'$ are of capacity $1$, thus they are kept unmodified in $G'$.
By standard argument we split the forest $F'$ into node-disjoint trees $T_1',T_2',\ldots,T_p'$, such that
every tree $T_i'$ contains at least three, and at most $d = \Oh(1)$ terminals of~$X'$. 
By projecting the trees $T_i'$ back onto $G$, we obtain a sequence of trees $T_1,T_2,\ldots,T_p$,
such that every vertex $v \in V(G)$ is present in at most $\capac{v}$ trees~$T_i$.
Furthermore, since terminals are of capacity one, every terminal belongs to at most one tree,
and every tree $T_i$ contains at least three and at most $d$ terminals.

In a greedy fashion, we chose a set $\Pairs'' \subseteq \Pairs'$ of size at least $\card{\Pairs'}/d^2$, such that
for every tree~$T_i$, at most one terminal pair of $\Pairs''$ has at least one terminal in $T_i$.
A pair $(s,t) \in \Pairs''$ is \emph{local} if both $s$ and $t$ lie in the same tree $T_i$, and \emph{distant} otherwise.
If at least half of the pairs of $\Pairs''$ are local, we can route them along trees $T_i$,
obtaining a desired routing of size at least $\card{\Pairs''}/2 \geq \card{\Pairs'}/(2d^2) = \Omega(\card{f}/(\alpha\card{S}))$
and terminate the algorithm.
Otherwise, we obtain a flow~$g_4$ as follows: for every terminal $t$ in a distant pair in $\Pairs''$, we take the tree $T_i$ it lies
on, route~$3/5$ amount of flow along $T_i$ 
equidistributed to three arbitrarily chosen terminals $t^1,t^2,t^3$ on~$T_i$ from~$\Pairs'$ (i.e., every terminal~$t^j$ receives $1/5$ amount of flow), and then route the flow
along the flow $\frac{2}{5} g_3$ to~$u$. Since every tree $T_i$ routes $3/5$ amount of flow, and $g_3$ is a feasible flow,
the flow~$g_4$ is a feasible flow that routes $3/5$ amount of flow from every terminal of~$\Pairs''$ to $u$.
Furthermore, since at least half terminal pairs in $\Pairs''$ is distant, 
  we have $\card{g_4} \geq \frac{1}{2} \cdot 2\card{\Pairs''} = \Omega(\card{f}/(\alpha\card{S}))$.

\mypar{Final rounding of the flow.}
Let $X''$ be the set of all terminals of $\Pairs''$.
Since the flow $g_4$ routes \emph{more than $1/2$} amount of flow for every terminal in $X''$, we 
can conclude with simple rounding the flow $g_4$ in the same manner as it is done by Chekuri \etal~\cite[Section 3]{CKS-sqrtn}.
Construct an auxiliary graph $G'$ by adding a super-source $s^\ast$ of infinite capacity, adjacent to all terminals
of~$\Pairs''$. Extend~$g_4$ in the natural manner, by routing every flow path first from $s^\ast$
to an appropriate terminal.
The extended flow $g_4$ is now a single souce single sink flow from $s^\ast$ to~$u$ in a graph with integer capacities,
thus there exists an integral flow $g_5$ of no smaller size:
$$\card{g_5} \geq \card{g_4} = \frac{3}{5} \card{X''} = \frac{6}{5} \card{\Pairs''}.$$
Hence, for at least $1/5$ of the pairs $(s,t) \in \Pairs''$, the flow $g_5$ routes a single unit of flow
both from~$s$ and from $t$ to $u$. By combining these paths into a single path from $s$ to $t$,
we obtain an integral routing of size at least $\frac{1}{5}\card{\Pairs''} = \Omega(\card{f}/(\alpha \card{S}))$.
This finishes the proof of Proposition~\ref{prop:ndp-small-cutset}.

\subsection{Details of the algorithm}\label{ssec:ndp}

Equipped with Proposition~\ref{prop:ndp-small-cutset}, we can now proceed to the description of the approximation algorithm.
Assume we are given an \ndp{} instance $(G,\Pairs)$ and a path decomposition $(\sT,\bags)$ of $G$ of width less than~$r$;
recall that $\sT$ rooted in one of its endpoints. Let $(f,\bx)$ be a fractional solution to the multicommodity flow
relaxation for \ndp, as in Section~\ref{sec:prelim}.

The definitions of safe and good node are analogous, and we follow the same induction scheme.
\begin{definition}[Safe node]
A node $t \in V(\sT)$ is \emph{safe} with respect to $(f, \bx)$
if there is a second multicommodity flow $g$ in $G(t)$ such that $g$ satisfies the node capacities of $G(t)$ and,
for each vertex $z \in \subtree(t)$, $g$ routes at least ${1 \over 4r} \cdot \bx(z)$ units of flow from $z$ to the adhesion $\adh(t)$.
The node $t$ is \emph{unsafe} if it is not safe.
\end{definition}

\begin{definition}[Good node]
A node $t \in V(\sT)$ is \emph{good} with respect to $(f, \bx)$ if every flow path in the support of $f$ that
has an endpoint in $\subtree(t)$ also intersects~$\adh(t)$;
in other words, there does not exist a flow path that is completely contained in~$G[\below(t)]$.
A node is \emph{bad} if it is not good.
\end{definition}

\begin{theorem} \label{thm:ndp-induction-main}
Let $(G, \Pairs)$ be an instance of \ndp and let $(f, \bx)$ be a fractional solution for the instance, where $f$ is a feasible multicommodity flow in $G$ for the pairs $\Pairs$ with marginals $\bx$. Let $(\sT,\bags)$ be a path decomposition for $G$ of width less than $r$.
Let $\ell_1$ be the maximum size of a parent adhesion of an unsafe node, and let $\ell_2$ be the maximum size of a
parent adhesion of a bad node.
There is a constant $c$ and a polynomial time algorithm that constructs an integral routing of size at least
$${1 \over c r^3} \cdot \left(1 - {1 \over r} \right)^{\ell_1 + \ell_2} \cdot \card{f} \enspace .$$
\end{theorem}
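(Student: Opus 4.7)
The plan is to mirror the induction of Theorem~\ref{thm:induction-main} essentially line by line, with two substitutions: every invocation of Proposition~\ref{prop:small-cutset-routing} is replaced by its node-disjoint counterpart Proposition~\ref{prop:ndp-small-cutset}, and every edge-cut argument is replaced by a vertex-cut argument obtained via the standard vertex-splitting reduction from node-capacitated undirected flows to edge-capacitated directed flows. I perform the same preprocessing (handle connected components separately, delete empty bags so that the only empty parent adhesion is that of the root) and induct on $\ell_1 + \ell_2 + \card{V(G)}$. The crucial simplification afforded by pathwidth is that $\sT$ is a path rooted at an endpoint, so for every integer $s$ the nodes of $\sT$ with parent adhesion of size $s$ form a contiguous segment and the topmost such node is unique; this avoids the multiple incomparable ``topmost bad nodes'' of the edge case and, as flagged in the remark closing Section~\ref{sec:edp}, sidesteps the one step where the edge-disjoint proof does not transfer directly.

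For the base case $\ell_1 = \ell_2 = 0$, every non-root node is safe and good, so every flow path of $f$ meets $\bags(t_0)$; truncating each path to its shortest prefix ending at $\bags(t_0)$ produces a feasible node-capacitated flow that delivers $\bx(v)$ to $S \coloneqq \bags(t_0)$ for each $v$, and Proposition~\ref{prop:ndp-small-cutset} with $\card{S} \le r$ and $\alpha = 1$ returns $\Omega(\card{f}/r)$ routed pairs. For the case $0 \le \ell_1 < \ell_2$, I pick the unique topmost bad node $t_1$ with $\card{\adh(t_1)} = \ell_2$, isolate the subflow $f^{\mathrm{inside}}$ of paths contained in $G[\below(t_1)]$, invoke safety of $t_1$ together with Proposition~\ref{prop:ndp-small-cutset} (with $\alpha = 4r$ and $S = \adh(t_1)$) to obtain an integral routing in $G(t_1)$ of $\Omega(\card{f^{\mathrm{inside}}}/r^2)$ pairs, and either return this routing if $\card{f^{\mathrm{inside}}} \ge \card{f}/r$ or replace $f$ by $f - f^{\mathrm{inside}}$, which makes $t_1$ good and strictly decreases $\ell_2$ without increasing $\ell_1$, and recurse.

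For the remaining case $0 < \ell_1 = \ell_2$, I pick the lowest unsafe node $t^\circ$ with $\card{\adh(t^\circ)} = \ell_1$ and imitate Lemma~\ref{lem:belowU} on a vertex-split auxiliary graph of $G[\subtree(t^\circ)]$: replace every vertex $v$ by an edge of capacity $\capac{v}$, attach a super-source $s^\ast$ with arcs of capacity $\bx(v)/(4r)$ into every $v \in \subtree(t^\circ)$, and a super-sink $t^\ast$ reachable from $\adh(t^\circ)$ by infinite-capacity arcs. A minimum $s^\ast$-$t^\ast$ cut in this graph projects back to a vertex set $U \subseteq \below(t^\circ)$ and a node separator $\bd(U)$ with $\capac{\bd(U)} < \bx(U)/(4r)$, and the same min-cut-maximality argument yields the hereditary property that $\adh(t) \subseteq U \cup \bd(U)$ forces $\subtree(t) \subseteq U \cup \bd(U)$. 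I then split into $\instance_1$ on $G[U \cup \bd(U)]$ and $\instance_2$ on $G - U$, apply the induction hypothesis to each ($\ell_1$ strictly decreases in $\instance_1$, while $\card{V(G)}$ strictly decreases in $\instance_2$), and bound the cross-cut flow loss $g = f - f_1 - f_2$ by $\capac{\bd(U)} < \bx(U)/(4r)$ exactly as in the edge case to obtain the geometric factor $(1 - 1/r)^{\ell_1 + \ell_2}$.

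The main obstacle I anticipate is making the node-capacitated analogue of Lemma~\ref{lem:belowU} and the subsequent splitting fully rigorous: one must project the auxiliary-graph cut back to a genuine node separator $\bd(U)$ of $G$ satisfying both the quantitative capacity bound and the structural inheritance property, and one must justify that the two integral routings produced by induction on $\instance_1$ and $\instance_2$ can be combined into a single feasible routing in $G$ despite sharing the vertices of $\bd(U)$. The natural charging scheme, in which the spare capacity of $\bd(U)$ absorbs the double usage of shared vertices and is accounted against the cross-cut flow loss $g$, should work exactly as in the edge case; once this is in place, the rest is a routine transcription of the proof of Theorem~\ref{thm:induction-main}, with the final constant $c$ determined by the implicit constant in Proposition~\ref{prop:ndp-small-cutset}.
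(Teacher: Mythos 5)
Your outline matches the paper's proof in the base case and in the case $0 \leq \ell_1 < \ell_2$ (including the correct observation that pathwidth gives a unique topmost bad node, which is exactly why the argument survives the move to node capacities). The genuine gap is in the case $0 < \ell_1 = \ell_2$, and it sits precisely where you flag the ``main obstacle''. You split into $\instance_1$ on $G[U \cup \bd(U)]$ and $\instance_2$ on $G - U$, so the separator vertices $\bd(U)$ belong to \emph{both} subinstances. The two integral routings returned by induction may each use every vertex $v \in \bd(U)$ up to $\capac{v}$ times, so their union can have node congestion $2$ on the separator, and the theorem demands a feasible (congestion-$1$) routing. Your proposed fix --- charging the double usage against the cross-cut flow loss $g$ --- does not work: $\card{g}$ bounds the amount of \emph{fractional} flow discarded when passing to the subinstances, not any spare capacity left on $\bd(U)$ by the \emph{integral} routings, and there is no argument that only a small fraction of the paths in $\sP_1$ or $\sP_2$ touch $\bd(U)$ (potentially all of them do). Note also that your analogy with the edge case is based on a misreading: there the split is $G_1 = G[U]$, $G_2 = G - U$, the crossing edges $\bd(U)$ lie in \emph{neither} subgraph, and nothing is shared, so no charging scheme was ever needed.

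The paper's resolution is simply to exclude the separator from both sides: it takes $G_1 = G[U]$ and $G_2 = G - N[U]$, which are vertex-disjoint, so the routings $\sP_1$ and $\sP_2$ combine trivially into a feasible routing; the flow lost, namely all flow paths meeting $N(U)$, has value at most $\capac{N(U)} < \frac{1}{4r}\bx(U)$ by feasibility of $f$, and the arithmetic of the edge case goes through verbatim. With that choice the inheritance property is stated for $U$ itself (if $\adh(t) \subseteq U$ then $\subtree(t) \subseteq U$), which is what makes the claim ``$\ell_1$ strictly drops in $\instance_1$'' clean: $|\adh_1(t)| = |\adh(t)|$ forces $\adh(t) \subseteq U$, hence $\subtree(t) \subseteq U$, hence safeness of $t$ is unchanged, and the choice of $t^\circ$ as the lowest unsafe node applies. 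Your variant, phrased with $U \cup \bd(U)$, would additionally require reproving this inheritance for the enlarged set and still would not resolve the congestion problem above. One further small point: the paper applies the inductive hypothesis to each connected component of $\instance_1$ and $\instance_2$ separately (safeness/goodness are preserved under this), a detail worth stating; aside from the splitting issue, the rest of your transcription is correct.
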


Again as in the case of \edp, we can assume that the considered graph $G$ is connected and that no bag is empty,
and thus the only empty adhesion is the parent adhesion of the root.

\mypar{Base case.}
In the base case $\ell_1 = \ell_2 = 0$ nothing changes as compared to \edp: all flow paths pass through the root bag,
and Proposition~\ref{prop:ndp-small-cutset} allows us to route integrally $\Omega(\card{f}/r)$ paths.

\mypar{Inductive step when $0 \leq \ell_1 < \ell_2$.}
Since we are considering now a path decomposition (as opposed to tree decomposition in the previous section),
there exists a single topmost bad node~$t^\circ$ with parent adhesion of size~$\ell_2$.
Define~$f^\mathrm{inside}$ to be the subflow of $f$ consisting of all flow paths completely contained in~$G[\below(t^\circ)]$.
Since $\ell_1 < \ell_2$, the node $t^\circ$ is safe, and the flow witnessing it
together with Proposition~\ref{prop:ndp-small-cutset} allows us to integrally route $\Omega(\card{f^\mathrm{inside}}/r^2)$ 
terminal pairs. If $\card{f^\mathrm{inside}} > \card{f}/r$, then we are done. 
Otherwise, we drop the flow $f^\mathrm{inside}$ from $f$, making $t^\circ$ and all its descendants good (thus decreasing
$\ell_2$ in the constructed instance), while losing only $1/r$ fraction of the flow $f$, and pass the instance
to an inductive step.

\mypar{Inductive step when $0 < \ell_1 = \ell_2$.}
Here again we take $t^\circ$ to be the lowest node of $\sT$ that is unsafe and has parent adhesion of size $\ell_1$.
By the definition of an unsafe node and Menger's theorem, there exists a set $U \subseteq \below(t^\circ)$
such that $\capac{N(U)} < \frac{1}{4r} \bx(U)$.
Using the same argument as in the proof of Lemma~\ref{lem:belowU}, we can ensure property~\ref{belowU:adh}, that is
that if $U$ contains an adhesion~$\adh(t)$, it contains as well the entire set $\subtree(t)$.

As in the case of \edp, we split into instances $\instance_1$ and $\instance_2$ by taking $G_1 = G[U]$
and $G_2 = G-N[U]$, with inherited tree decompositions from $(\sT,\bags)$.
Since all nodes with parent adhesions of size larger than $\ell_1= \ell_2$ are good, there are also good in instances $\instance_i$
(i.e., Lemma~\ref{lem:remains-good} holds here as well)
and we can again apply the inductive step to every connected component of
the instance $\instance_2$ with the same values of $\ell_1$ and $\ell_2$, obtaining a routing $\sP_2$
of size as in~\eqref{eq:inst2} (with $144$ replaced by a constant $c$).

We analyse the instance $\instance_1$, without breaking it first into connected components.
That is, we argue that in $\instance_1$ the value of $\ell_1$ dropped, that is,
all nodes $t$ satisfying $|\adh(t)| = |\adh_1(t)| = \ell_1$ are safe;
note that they will remain safe once we consider every connected component separatedly.
However, this fact follows from property~\ref{belowU:adh} of the set $U$ (Lemma~\ref{lem:belowU}):
if for some node $t$ we have $|\adh(t)| = |\adh_1(t)|$, it follows that $\adh(t) \subseteq U$ hence $\subtree(t) \subseteq U$
and the notion of safeness for $t$ is the same in $\instance_1$ and in the original instance.
However, $\adh(t) \subseteq U \subseteq \below(t^\circ)$
implies $t \preceq t^\circ$ and $t \neq t^\circ$, hence $t$ is safe in the original instance. 

Consequently, an application of inductive step for every connected component of $\instance_1$ uses strictly smaller value of $\ell_1$,
and we obtain an integral routing $\sP_1$ in~$\instance_1$ of size as in~\eqref{eq:inst1} (again with $144$ replaced by a constant $c$).
The remainder of the analysis from the previous section does not change, concluding the proof of Theorem~\ref{thm:ndp-induction-main}.\hfill$\Box$

\section{Well-linked decomposition}
\label{sec:wl-decomp}

In this section, we show that the argument given in Section~\ref{sec:edp} gives the following well-linked decomposition for edge-capacitated graphs. We first give some preliminary definitions.

\mypar{Multi-commodity flows.}
We represent a multi-commodity flow instance as a demand vector~$\vd$ that assigns a demand $d(u, v) \in \mathbb{R}_+$ to each ordered pair $(u, v)$ of vertices of $G$. A \emph{product} multi-commodity flow instance satisfies $d(u, v) = w(u) w(v)$ for each pair $(u, v)$, where $w: V \rightarrow \mathbb{R}_+$ is a weight functions on the vertices of $G$. We say that $\vd$ is \emph{routable} if there is a feasible multi-commodity flow in $G$ that routes $d(u, v)$ units of flow from $u$ to $v$ for each pair $(u, v)$.

We recall the following two quantities associated with a multi-commodity flow instance: the maximum concurrent flow and the sparsest cut. The \emph{maximum concurrent flow} is the maximum value $\lambda \geq 0$ such that $\lambda \vd$ is routable. The \emph{sparsity} of a cut $S \subseteq V$ is the ratio $\card{\delta(S)} / \demand_{\vd}(S)$, where $\demand_{\vd}(S) = \sum_{u \in S, v \in V \setminus S} d(u, v)$ is the total demand separated by $S$. A \emph{sparsest cut} is a cut with minimum sparsity. The minimum sparsity of any cut is an upper bound on the maximum concurrent flow, and the former could be strictly larger. The \emph{flow-cut gap} is the worst-case ratio between the sparsest cut and the maximum concurrent flow. The flow-cut gap is $\Oh(\log k)$ in general graphs, where $k$ is the number of commodities (that is, the number of non-zero demands $d(u, v)$) \cite{LeightonR99}. For product multi-commodity flows, the flow-cut gap is $\Oh(\log r)$ for graphs of treewidth $r$ \cite{chekuri2009note}. Moreover, for both of these results, there are polynomial time algorithms for computing a cut with sparsity at most $\alpha(G) \cdot \lambda$, where $\alpha(G)$ is the flow-cut gap value and $\lambda$ is the maximum concurrent flow for $\vd$ \cite{LeightonR99,chekuri2009note}.

\mypar{Well-linked sets.}
Following \cite{cks-stoc05}, we work with two notions of well-linked sets, cut well-linked sets and flow well-linked sets. For convenience, we work with graphs with unit edge capacities; it is straightforward to extend the argument to arbitrary edge capacities.

Let $\pi: X \rightarrow \mathbb{R}_+$ be a weight function on a set $X \subseteq V(G)$ of vertices. The set $X$ is \emph{$\pi$-flow-well-linked} in $G$ if there is a feasible multi-commodity flow that simultaneously routes $d(u, v) \coloneqq \pi(u) \pi(v) / \pi(X)$ units of flow from $u$ to~$v$ for every pair $(u, v)$ of nodes in $X$. The set $X$ is \emph{$\pi$-cut-well-linked} in $G$ if $\card{\delta(S)} \geq \min\set{\pi(S \cap X), \pi((V \setminus S) \cap X)}$.

Chekuri \etal \cite{cks-stoc05} gave the following well-linked decomposition theorem for general graphs that has found many applications. We only state the theorem for flow-well-linked instances, there is an analogous result for cut-well-linked instances.

\begin{theorem}[\cite{cks-stoc05}]
	Let $\opt$ be the value of a solution to \LP for a given instance $(G,
	\Pairs)$ of \edp on a general graph $G$. Let $\alpha = \alpha(G) \geq 1$ be an upper
	bound on the worst case flow-cut gap for product multi-commodity
	flows in~$G$. There is a partition of $G$ into node-disjoint
	induced subgraphs $G_1, G_2, \dots, G_q$ and weight
	functions $\pi_i: V(G_i) \rightarrow \mathbb{R}_+$ with the
	following properties. Let~$\Pairs_i$ be the induced pairs of
	$\Pairs$ in $G_i$ and let $X_i$ be the endpoints of the pairs in
	$\Pairs_i$. We have
	\begin{enumerate}
		\item[(a)] $\pi_i(u) = \pi_i(v)$ for each pair $uv \in \Pairs_i$.
		\item[(b)] $X_i$ is $\pi_i$-flow-well-linked in $G_i$.
		\item[(c)] $\sum_{i = 1}^{q} \pi_i(X_i) = \Omega(\opt/
		(\alpha \log{\opt})) = \Omega(\opt / \log^2 k)$.
	\end{enumerate}
	Moreover, such a partition is computable in polynomial time if
	there is a polynomial time algorithm for computing a node
	separator with sparsity at most $\alpha(G)$ times the maximum
	concurrent flow.
\end{theorem}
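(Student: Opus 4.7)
The plan is to construct the decomposition by recursively extracting sparse cuts until every remaining subgraph is flow-well-linked, in the spirit of~\cite{cks-stoc05}. I would initialize the weight function by setting $\pi(v) = x(v)$ for every terminal $v$, where $x(v)$ is its LP marginal (so $\pi(s_i) = \pi(t_i) = x_i$). Property~(a) is then satisfied initially, and the total weight $\pi(V(G)) = 2\opt$ sets the correct scale for property~(c).

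The recursion processes one piece $(H, \pi_H)$ at a time. Let $X_H = \{v : \pi_H(v) > 0\}$, and consider the product demand $d(u,v) = \pi_H(u)\pi_H(v)/\pi_H(X_H)$. Compute the maximum concurrent flow $\lambda$ for this demand; if $\lambda \geq 1/2$, say, then scaling $\pi_H$ down by a factor of $2$ certifies that $X_H$ is $\pi_H$-flow-well-linked in $H$, and the piece is output. Otherwise the flow-cut gap assumption produces, in polynomial time, a cut $S \subsetneq V(H)$ of sparsity at most $\alpha(G)\lambda < \alpha(G)/2$. We delete the crossing edges $\delta_H(S)$, restore property~(a) by decreasing $\pi_H$ on the terminals whose mates lie on the opposite side of the cut until the two marginals of each pair agree, and then recurse independently on each side.

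The main obstacle is the accounting: the $\pi$-weight surviving the recursion must be $\Omega(\opt / (\alpha \log \opt))$. The crucial observation is that the weight removed at a given cut is at most $\pi_H(S \cap X_H)$ (assuming $S$ is the smaller side), and the total loss can be amortized via a potential such as $\Phi(\pi) = \sum_v \pi(v) \log(\pi(X)/\pi(v))$, or more concretely by an $O(\log \opt)$-levels argument: each unit of $\pi$-weight can lie on the smaller side of at most $O(\log \opt)$ nested cuts, since every such event at least halves the total weight of the side containing it. Combined with the sparsity bound $|\delta_H(S)| \le \alpha \lambda \cdot \pi_H(S)(\pi_H(V(H)) - \pi_H(S))/\pi_H(V(H))$, which controls how much rebalancing is needed to handle separated pairs, this yields the claimed $\Omega(\opt/(\alpha \log \opt))$ lower bound. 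The trickiest part will be ensuring that rebalancing cascades (a pair cut on one side may force the same pair to be retrimmed later) do not collectively cost more than the charging scheme accounts for; this is handled by always charging the reduction to the smaller side and observing that subsequent cuts of the same pair occur inside strictly smaller pieces. Polynomial running time and termination follow because each recursive step strictly reduces the edge count of the processed subgraph.
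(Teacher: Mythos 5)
First, note that the paper does not prove this statement at all: it is quoted verbatim from Chekuri, Khanna, and Shepherd \cite{cks-stoc05} as background, and the paper's own contribution in Sect.~\ref{sec:wl-decomp} is the treewidth analogue (Theorem~\ref{thm:wl-decomp}), proved by the entirely different safe/good-node induction of Sect.~\ref{sec:edp}. So your proposal is measured against the known CKS argument, whose overall shape (recursively extract approximately sparsest cuts for the product demand induced by the LP marginals, drop separated pairs, recurse) you have correctly identified.

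There is, however, a genuine quantitative gap, and it sits exactly where the factor $\alpha\log\opt$ in property~(c) has to come from. You stop the recursion when the maximum concurrent flow satisfies $\lambda \geq 1/2$, i.e.\ at a \emph{constant} well-linkedness threshold, and otherwise cut along a set $S$ of sparsity at most $\alpha\lambda < \alpha/2$. With that threshold the weight lost at a single cut (separated pairs, bounded via the LP flow by $O(|\delta(S)|)$) can be a $\Theta(\alpha)$ fraction of $\min\{\pi(S),\pi(V\setminus S)\}$ --- a constant fraction even for $\alpha=1$, and more than the whole smaller side once $\alpha\geq 2$. Your amortization (``each unit of weight is on the smaller side of at most $O(\log\opt)$ nested cuts'') then gives nothing: losing a constant fraction per charge over $\log\opt$ charging levels leaves only a polynomially small fraction of the weight, and your weaker fallback bound ``loss at most $\pi_H(S\cap X_H)$'' is even worse. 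Moreover, with your parameters the conclusion you would be claiming is $\Omega(\opt)$ retained weight at constant well-linkedness, which is stronger than the theorem and not what the scheme delivers; the $\alpha$ never enters your final bound. The correct fix, and the crux of the CKS proof, is to set the stopping threshold to $\epsilon = \Theta(1/(\alpha\log\opt))$: one outputs a piece when $\lambda\geq\epsilon$, certifying that $X_H$ is well-linked for the \emph{scaled} weights $\epsilon\pi_H$, and otherwise the cut found has sparsity below $\alpha\epsilon$, so each charge removes only an $O(1/\log\opt)$ fraction of the smaller side; then the surviving (unscaled) weight is a constant fraction of $2\opt$ and the loss factor $\alpha\log\opt$ appears solely through the final scaling by $\epsilon$. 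Two further points need care even then: the $O(\log\opt)$ bound on the number of charges per unit of weight is not automatic for fractional marginals (pieces can keep halving in weight), so one must first discard pairs of negligible LP value (or argue similarly); and the ``rebalancing cascade'' you worry about does not arise, since a separated pair simply loses all of its weight in both pieces.
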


In the remainder of this section, we show that, for graphs with treewidth at most $r$, the argument in Sect.~\ref{sec:edp} implies a well-linked decomposition that loses a factor of $\Oh(r^3)$ instead of $\Oh(\log^2k)$.

\begin{theorem} \label{thm:wl-decomp}
	Let $\opt$ be the value of a solution to \LP for a given instance $(G, \Pairs)$ of \edp on a graph $G$ of treewidth at most $r$. There is a partition of $G$ into node-disjoint induced subgraphs $G_1, G_2, \dots, G_q$ and weight functions $\pi_i: V(G_i) \rightarrow \mathbb{R}_+$ with the following properties. Let~$\Pairs_i$ be the induced pairs of $\Pairs$ in $G_i$ and let $X_i$ be the endpoints of the pairs in $\Pairs_i$. We have
	\begin{enumerate}
		\item[(a)] $\pi_i(u) = \pi_i(v)$ for each pair $uv \in \Pairs_i$.
		\item[(b)] There is a feasible multi-commodity flow $g_i$ in $G_i$ from $X_i$ to a single vertex $z$ that routes $\pi_i(v)$ units of flow from $v$ to $z$ for each vertex $v \in X_i$. Thus $X_i$ is $\pi_i$-flow-well-linked in $G_i$.
		\item[(c)] $\sum_{i = 1}^{q} \pi_i(X_i) = \Omega(\opt/ r^3)$.
	\end{enumerate}
	Moreover, such a partition is computable in polynomial time if one is given a tree decomposition of $G$ of width at most $r$.
\end{theorem}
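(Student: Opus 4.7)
The plan is to mirror the inductive proof of Theorem~\ref{thm:induction-main}: we induct on $\ell_1 + \ell_2 + |V(G)|$ and argue that there is a node-disjoint partition satisfying~(a),~(b), and $\sum_i \pi_i(X_i) \ge (c/r^3)(1-1/r)^{\ell_1+\ell_2}|f|$ for an absolute constant $c$, where $\ell_1, \ell_2$ are as defined in Section~\ref{sec:edp}. The one substantive change is at every step where that algorithm invokes Proposition~\ref{prop:small-cutset-routing} to round a concentrated flow into an integral routing: instead, we halt and output the concentrated structure directly as a well-linked piece, certified by property~(b).

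The construction of an actual piece takes place only in the base case $\ell_1 = \ell_2 = 0$. Every flow path of $f$ then uses the root bag $\beta(t_0)$, which has at most $r$ vertices, so averaging selects some $z \in \beta(t_0)$ crossed by at least $|f|/r$ units of flow. For each pair $(s_j, t_j)$, let $y_j$ be the amount of flow for $(s_j, t_j)$ passing through $z$, so $\sum_j y_j \ge |f|/r$. Splitting every such path at $z$ into a prefix from $s_j$ to $z$ and a reversed suffix from $t_j$ to $z$ produces a feasible flow $g$ that delivers $y_j$ units to $z$ from each of $s_j$ and $t_j$. Setting $G_1 := G$, $X_1 := \{s_j, t_j : y_j > 0\}$, and $\pi_1(s_j) := \pi_1(t_j) := y_j$ yields a single-piece partition satisfying~(a) and~(b), with $\pi_1(X_1) \ge 2|f|/r \ge (c/r^3)|f|$ for any $c \le 2$.

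The inductive steps follow Section~\ref{sec:edp} line by line. In the case $\ell_1 = \ell_2 > 0$, Lemma~\ref{lem:belowU} produces a cut $U$ whose two sides $G_1 := G[U]$ and $G_2 := G - U$ are node-disjoint; we recurse on the two instances $\mathcal{I}_1, \mathcal{I}_2$, noting that $\mathcal{I}_1$ has strictly smaller $\ell_1$ (Lemma~\ref{lem:inst1}) while $\mathcal{I}_2$ inherits the same parameters but strictly smaller vertex count, and we concatenate the resulting partitions. The cut-loss estimate $|g| \le (1/r)|f_1|$ from~\eqref{eq:instloss} reproduces the telescoping calculation that combines~\eqref{eq:inst1} and~\eqref{eq:inst2} into the desired recurrence. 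In the case $\ell_1 < \ell_2$ we identify the topmost bad nodes $t_1, \dots, t_p$ with inside flows $f^{\mathrm{inside}}_i$; if $\sum_i |f^{\mathrm{inside}}_i| \le |f|/r$ we drop these flows (the remainder $f'$ satisfies $|f'| \ge (1-1/r)|f|$) and recurse with $\ell_2$ reduced by one. Otherwise we extract a well-linked piece from each $G(t_i)$ by repeating the base-case trick: safeness of $t_i$ furnishes a flow to $\adh(t_i)$ carrying a $1/(4r)$ fraction of each $\bx(v)$, so averaging over the at most $r$ vertices of $\adh(t_i)$ yields a sink $z_i$ receiving at least $|f^{\mathrm{inside}}_i|/(2r^2)$ units, and symmetrizing across pairs gives $\pi_i(X_i) = \Omega(|f^{\mathrm{inside}}_i|/r^2)$ per piece, for a total of $\Omega(|f|/r^3)$.

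The main obstacle is the node-disjointness requirement in the final case: unlike in the EDP algorithm, where the $G(t_i)$ only need to be edge-disjoint (and indeed are, even when they share adhesion vertices), the pieces of a well-linked partition must be vertex-disjoint. We resolve this by choosing the sinks $z_1, \ldots, z_p$ pairwise distinct via a greedy pigeonhole (the $|f^{\mathrm{inside}}_i|/(2r^2)$ slack easily absorbs a constant-factor loss from avoiding previously-chosen sinks) and assigning each remaining shared adhesion vertex to exactly one piece; the safety flow can be rerouted within $\subtree(t_i)$ to avoid adhesion vertices claimed by other pieces at the cost of another constant factor, still within the $r^2$ slack between $|f^{\mathrm{inside}}_i|/r^2$ and the $1/r^3$ target. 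The polynomial-time guarantee follows from the constructive nature of every step, including the min-cut computation of Lemma~\ref{lem:belowU}, and a recursion of depth $O(r + |V(G)|)$.
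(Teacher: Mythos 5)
Your overall scheme (induct on $\ell_1+\ell_2+\card{V(G)}$, replace every invocation of Proposition~\ref{prop:small-cutset-routing} by outputting the concentrated flow itself as a well-linked piece) is exactly the paper's, and your base case and the two recursive branches in which you simply recurse (dropping the inside flows, or splitting along the cut $U$ of Lemma~\ref{lem:belowU}) are fine; the base-case "split every path at a heavy vertex $z\in\bags(t_0)$" argument is a legitimate, slightly more direct variant of what the paper does via its nice-flow lemma.

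The gap is in the branch $\ell_1<\ell_2$ with $\sum_i \card{f^{\mathrm{inside}}_i}>\card{f}/r$, precisely at the point you flag as "the main obstacle." Your fix --- choose the sinks $z_1,\dots,z_p$ pairwise distinct by a greedy pigeonhole, and reroute safety flows to avoid adhesion vertices claimed by other pieces, each at constant-factor cost --- does not work. First, pairwise distinct sinks may not exist at all: the topmost bad nodes $t_1,\dots,t_p$ can be many siblings whose parent adhesions all coincide (e.g., every $\adh(t_i)$ equal to the same set of $\ell_2\le r$ vertices, even a single vertex), so $p$ can vastly exceed the number of available sink candidates. Second, even when distinct sinks exist, avoiding an already-claimed vertex is not a constant-factor loss: the flow from $\below(t_i)$ into $\adh(t_i)$ can be concentrated almost entirely on one adhesion vertex (which may even be essentially the only connection of $\below(t_i)$ to the rest of the graph), so both "pick a different sink" and "reroute the safety flow within $\subtree(t_i)$ to avoid claimed adhesion vertices" can lose an unbounded factor, or be outright impossible. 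The paper resolves node-disjointness differently: it first makes each extracted flow \emph{nice} for $t_i$ (all flow paths end at a single $z_i\in\adh(t_i)$ and otherwise stay inside $G[\below(t_i)]$, via Lemma~\ref{lem:nice-flow}), takes the piece to be $H_i=G[\below(t_i)\cup\{z_i\}]$, and then, instead of forcing the $z_i$ to be distinct, \emph{merges} all pieces sharing the same sink $z$ into one component $H_z$ with flow $h_z=\sum\{h_i: z_i=z\}$; the merged pieces are node-disjoint for distinct $z$, feasibility of $h_z$ follows since the constituent $H_i$ meet only at $z$, and property~(b) still holds because every path of $h_z$ ends at the common $z$. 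Without this (or an equivalent) merging step, your construction does not yield node-disjoint pieces with the claimed $\Omega(\card{f}/r^3)$ total weight, so as written the proof of the key case is incomplete.
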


In the remainder of this section, we show how to modify the argument in Section~\ref{sec:edp} to prove Theorem~\ref{thm:wl-decomp}. We use the definitions and notation introduced in Section~\ref{sec:edp}.

As before, our starting point is a fractional solution $(f, \bx)$ to \LP for the instance $(G, \Pairs)$. We prove the following theorem by induction on the maximum size of a parent adhesion
that is bad or unsafe.

\begin{theorem} \label{thm:wl-induction-main}
	Let $(G, \Pairs)$ be an instance of \edp and let $(f, \bx)$ be a fractional solution for the instance, where $f$ is a feasible multicommodity flow in $G$ for the pairs $\Pairs$ with marginals $\bx$. Let $(\sT,\bags)$ be a tree decomposition for $G$ of width less than $r$. Let $\ell_1$ be the maximum size of a parent adhesion of an unsafe node, and let $\ell_2$ be the maximum size of a parent adhesion of a bad node. There is a polynomial time algorithm that constructs a partition of $G$ into node-disjoint induced subgraphs $G_1, G_2, \dots, G_q$ and weight functions $\pi_i: V(G_i) \rightarrow \mathbb{R}_+$ with the following properties. Let~$\Pairs_i$ be the induced pairs of $\Pairs$ in $G_i$ and let $X_i$ be the endpoints of the pairs in $\Pairs_i$. We have
	\begin{enumerate}
		\item[(a)] $\pi_i(u) = \pi_i(v)$ for each pair $uv \in \Pairs_i$.
		\item[(b)] There is a feasible multi-commodity flow $g_i$ in $G_i$ from $X_i$ to a single vertex $z$ that routes $\pi_i(v)$ units of flow from $v$ to $z$ for each vertex $v \in X_i$. Thus $X_i$ is $\pi_i$-flow-well-linked in $G_i$.
		\item[(c)] $\sum_{i = 1}^{q} \pi_i(X_i) \geq {1 \over 12 r^3} \cdot \left(1 - {1 \over r} \right)^{\ell_1 + \ell_2} \cdot \card{f}$.
	\end{enumerate}
\end{theorem}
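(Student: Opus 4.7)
}

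The plan is to mirror exactly the inductive structure used for Theorem~\ref{thm:induction-main}, proceeding by induction on $\ell_1 + \ell_2 + |V(G)|$, but to replace the single application of Proposition~\ref{prop:small-cutset-routing} (which produces an integral routing) with a ``well-linked'' analog that produces a single piece of the partition. After the usual preprocessing (drop empty bags, pass to connected components), the only empty parent adhesion is that of the root, so $\ell_1,\ell_2$ are well-defined. The structural engine will be the following claim, which I would prove first and then apply in every case:

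\emph{Key Claim.} If $(f,\bx)$ is a fractional solution on $(G,\Pairs)$ and there is a second feasible flow $g$ routing at least $\bx(v)/\alpha$ units from each terminal $v$ into a set $S\subseteq V(G)$, then one can pick a single vertex $z\in S$ and define weights $\pi$ on the terminals with $\pi(u)=\pi(v)$ for every pair $(u,v)\in\Pairs$, such that there is a feasible flow in $G$ routing $\pi(v)$ units from each $v\in X$ to $z$, and $\pi(X) = \Omega(|f|/(\alpha|S|))$. To prove this I would follow the first two steps of Section~\ref{ssec:ndp-small-cutset}: first \textbf{symmetrize} $g$ by using $f$ to ``redirect'' the flow originating at $s$ through the pair-path to $t$ and onward via $g$ to $S$ (paying a constant factor loss); then \textbf{restrict to the best single vertex} $z\in S$ via averaging, paying a factor of $|S|$. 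Because we only need a fractional well-linked witness here, we do \emph{not} need any of the subsequent half-integral rounding or tree-clustering steps used in the \ndp{} proof, which simplifies matters considerably.

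\emph{Base case} $\ell_1=\ell_2=0$: every flow path of $f$ intersects the root bag $\bags(t_0)$, of size at most $r$. Splitting each path at its first entry into $\bags(t_0)$ gives a feasible second flow $g$ routing $\bx(v)$ units from each terminal into $S=\bags(t_0)$. Apply the Key Claim with $\alpha=1$, $|S|\le r$, and return the single piece $G_1=G$ with weight $\pi_1(X_1) = \Omega(|f|/r)$, which comfortably beats the target $|f|/(12 r^3)$.

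\emph{Inductive step} $0\le\ell_1<\ell_2$: take the topmost bad nodes $t_1,\dots,t_p$ with $|\adh(t_i)|=\ell_2$, define $f_i^{\mathrm{inside}}$ as in the original proof, and use that $t_i$ is safe (since $\ell_2>\ell_1$) together with the Key Claim applied to $f_i^{\mathrm{inside}}$ inside $G(t_i)$ with $S=\adh(t_i)$ and $\alpha=4r$; this produces a piece with terminal weight $\Omega(|f_i^{\mathrm{inside}}|/r^2)$. If $\sum_i|f_i^{\mathrm{inside}}|\ge |f|/r$, then summing over $i$ gives total weight $\Omega(|f|/r^3)$ and we return these pieces. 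Otherwise we discard these flows (losing a factor $1-1/r$), which makes each $t_i$ good and therefore strictly decreases $\ell_2$; apply the induction hypothesis. The edge-disjointness of the $G(t_i)$'s justifies combining the pieces.

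\emph{Inductive step} $0<\ell_1=\ell_2$: pick the lowest unsafe $t^\circ$ with $|\adh(t^\circ)|=\ell_1$, and use Lemma~\ref{lem:belowU} to obtain the cut set $U$ with $\capac{\bd(U)}<\bx(U)/(4r)$ and the adhesion-closure property. Set $G_1=G[U]$, $G_2=G-U$, and form instances $\instance_1,\instance_2$ exactly as in Section~\ref{sec:edp}. Lemma~\ref{lem:remains-good} and Lemma~\ref{lem:inst1} carry over verbatim, so $\ell_2$ is non-increasing for $\instance_2$ and $\ell_1$ strictly drops for $\instance_1$. Apply the induction hypothesis to both instances, take the union of the two partitions (they are node-disjoint because $V(G_1)\cap V(G_2)=\emptyset$), and verify condition~(c) using the same accounting as in~\eqref{eq:inst2}--\eqref{eq:instloss}: pairs split across $\bd(U)$ are simply dropped from every $\Pairs_i$, but their total flow is absorbed by $|g|\le |f_1|/r$, which is exactly the slack that the factor $(1-1/r)$ in the inductive hypothesis provides.

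The main obstacle I anticipate is the Key Claim: we need a well-linked witness that simultaneously (i) routes to a \emph{single} vertex $z$, not just to a set, as required by condition~(b), and (ii) has $\pi(s)=\pi(t)$ per pair, as required by condition~(a). Both are handled by the symmetrize-then-restrict argument above, but care is required to check that the concatenated ``$s\to t$ along $f/(3\alpha)$ then $t\to z$ along $g/3$'' flow is feasible as a sum of three scaled feasible flows. Once the Key Claim is in hand, the rest is a routine transcription of the proof of Theorem~\ref{thm:induction-main}, with integral routings replaced by partition pieces and the target bound $1/(144 r^3)$ loosened to $1/(12 r^3)$ to absorb the constants introduced by symmetrization.
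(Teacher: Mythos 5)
Your overall plan is the paper's: induct exactly as in Theorem~\ref{thm:induction-main}, and replace Proposition~\ref{prop:small-cutset-routing} by a fractional witness that routes every terminal to a \emph{single} vertex of the adhesion with equal marginals per pair. Your ``Key Claim'' is essentially the paper's Lemma~\ref{lem:nice-flow} (the paper first restricts $g$ to the best vertex $z$ and then equalizes pairs via the congestion-$3$ flow $f+2g'$, while you symmetrize first and restrict second; the two orders are interchangeable and give the same $\Theta(1/(\alpha\card{S}))$ loss), and you are right that none of the rounding/clustering machinery from the \ndp{} section is needed. The base case and the case $0<\ell_1=\ell_2$ are handled exactly as in the paper.

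There is, however, a genuine gap in the inductive step $0\le\ell_1<\ell_2$: you justify combining the per-$t_i$ pieces by ``the edge-disjointness of the $G(t_i)$'s,'' but the theorem requires the output pieces to be \emph{node-disjoint induced subgraphs}, and edge-disjointness is not enough. Two incomparable nodes $t_i,t_j$ (e.g.\ siblings) can share up to $\ell_2\le r$ adhesion vertices, so the graphs $G(t_i)$ in which you apply the Key Claim are not node-disjoint; moreover, your Key Claim as stated only guarantees that the witness flow lives in $G(t_i)$, so its paths may pass through several vertices of $\adh(t_i)$, and then there is no node-disjoint induced piece containing it together with the required single sink. Two extra steps are needed (and are exactly what the paper does): first, make the witness flow \emph{nice} for $t_i$, i.e.\ truncate so that every flow path lies in $G[\below(t_i)]$ except for its single endpoint $z_i\in\adh(t_i)$, so that the flow is feasible in the induced subgraph on $\below(t_i)\cup\{z_i\}$ (the sets $\below(t_i)$ are pairwise disjoint for the topmost bad nodes, so these pieces can only collide at the $z_i$'s); second, since distinct $t_i$ may pick the same sink $z$, group all pieces with the same $z$ into one component $H_z$ with the summed flow and summed weights -- this preserves properties (a) and (b) (all paths of the merged flow still end at the common vertex $z$) and restores node-disjointness. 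Without the truncation and the grouping, conditions (b) (``a single vertex $z$ per piece'') and the node-disjointness of the partition fail, so this step of your proposal needs to be repaired; with it, your constant accounting goes through as you indicate.
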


In addition to the definitions of a good and safe node from Section~\ref{sec:edp}, it is convenient to have the following definition.

\begin{definition}[Nice flow]
  A multi-commodity flow $f$ is \emph{nice} for a node $t \in V(\sT)$ if there exists a single vertex $z \in \adh(t)$ such that each flow path $P$ in the support of $f$ ends at $z$ and $P - z$ is contained in $G[\below(t)]$.
\end{definition}

The proof of the following lemma follows from a result by Chekuri et al.~\cite[Proposition 3.4]{cns-tw}, and we include it for completeness.

\begin{lemma}
\label{lem:nice-flow}
  Consider a node $t \in V(\sT)$. Let $\Pairs$ be a collection of pairs with both endpoints in~$\below(t)$ such that each vertex of $\below(t)$ appears in at most one pair. Let~$f$ be a multi-commodity flow in $G[\below(t)]$ with marginals~$\bx$ satisfying $x(u) = x(v)$ for each pair $(u, v) \in \Pairs$. Suppose that there is a second multi-commodity flow~$g$ in~$G(t)$ that routes at least $x(v) / c$ units of flow for each vertex $v \in V(\Pairs)$ to~$\adh(t)$, where $c \geq 1$. Then there is a multi-commodity flow $h$ from $V(\Pairs)$ to $z$ with the following properties: $h$ is nice for $t$; for each pair $(u, v) \in \Pairs$, $h$ routes the same amount of flow for $u$ and $v$; $\card{h} \geq  {1 \over 3c \card{\adh(t)}} \cdot \card{f} \geq {1 \over 3cr} \cdot \card{f}$. 
\end{lemma}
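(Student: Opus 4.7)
The plan is to adapt the symmetrization trick from the proof of Proposition~\ref{prop:ndp-small-cutset}, while additionally truncating $g$ at its first entry into $\adh(t)$ (to enforce niceness) and using pigeonhole on $\adh(t)$ (to isolate a single endpoint $z$). First I scale $g$ so that it routes exactly $x(v)/c$ units from each terminal $v \in V(\Pairs)$. I then replace every flow path $P$ in the support of $g$ by its shortest prefix $P'$ ending in $\adh(t)$; the resulting flow $g'$ has $|g'| = |g| \ge 2|f|/c$, every path of $g'$ lies in $G(t)$, and every internal vertex of every such path lies in $\below(t)$. By averaging, some $z \in \adh(t)$ receives at least a $1/|\adh(t)|$ fraction of $g'$; let $g'_z$ be the corresponding subflow. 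Then $g'_z$ is nice for $t$, $|g'_z| \ge 2|f|/(c\,|\adh(t)|)$, and writing $a(v)$ for the amount of flow that $g'_z$ sends out of $v$, we have $a(v) \le x(v)/c$.

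Next, I define $h$ as the superposition, over pairs $(u,v) \in \Pairs$, of the following per-pair contribution. From $u$, send $(a(u)+a(v))/6$ units to $z$, of which $a(u)/6$ go directly along paths of $g'_z/3$ out of $u$, and $a(v)/6$ go via $u \to v$ using the commodity-$(u,v)$ portion of $f/(3c)$ (whose capacity $x(u)/(3c)$ is at least $(a(u)+a(v))/6$) concatenated with paths of $g'_z/3$ out of $v$. From $v$ the contribution is defined symmetrically. This $h$ is nice for $t$ (each walk is an $f$-prefix inside $G[\below(t)]$ followed by a $g'_z$-path whose only non-$\below(t)$ vertex is $z$), sends the same amount out of the two endpoints of each pair, and is feasible: on every edge the usage is at most $g/3 + f/(3c) \le (2/3)\,\capacity$. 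The total is
\[
|h| \;=\; \sum_{(u,v) \in \Pairs} 2 \cdot \frac{a(u)+a(v)}{6} \;=\; \frac{|g'_z|}{3} \;\ge\; \frac{2|f|}{3c\,|\adh(t)|} \;\ge\; \frac{|f|}{3c\,|\adh(t)|} \;\ge\; \frac{|f|}{3cr}.
\]
The only delicate point is balancing the scaling constants: the factor $1/3$ on $g'_z$ leaves enough slack to also use $f$ on the same edges, and the factor $1/(3c)$ on $f$ is precisely what makes the per-pair bound $(a(u)+a(v))/6 \le x(u)/(3c)$ follow from $a(u), a(v) \le x(u)/c$.
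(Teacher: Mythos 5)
Your proposal is correct and follows essentially the same route as the paper's proof: truncate $g$ to its first hits of $\adh(t)$, pick the most-popular vertex $z$ by averaging, and then use a scaled copy of $f$ together with the $z$-subflow, with a factor-$3$ congestion argument, to equalize the amounts routed from the two endpoints of each pair. The only (cosmetic) difference is that you balance each pair to the average $(a(u)+a(v))/2$ by symmetrizing through the partner, whereas the paper raises the smaller marginal up to the larger one using $f$ and two copies of the $z$-subflow; both yield $\card{h}\ge\card{g'_z}/3$ and the stated bound.
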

\begin{proof}
  Note that we may assume that each flow path in the support of $G(t)$ is contained in $G[\below(t)]$ except for its end vertex. Let $z \in \adh(t)$ be a vertex that receives the most $g$-flow, where the $g$-flow received by $z$ is the total amount of flow on paths ending at $z$. Let $g'$ be the sub-flow of $g$ consisting of only the flow paths ending at $z$, and let $\bx'$ be the marginals of $g'$.

  Note that $x'(v) \leq x(v) / c$ for every $v \in V(\Pairs)$ and $\card{g'} \geq {1 \over c \card{\adh(t)}} \sum_{v \in \below(t)} \card{f}$. However, there may be pairs $(u, v) \in \Pairs$ for which $x'(u) \neq x'(v)$. We use the flow $f$ to ensure the latter property as follows. Consider a pair $(u, v) \in \Pairs$ such that $x'(u) \neq x'(v)$, and suppose without loss of generality that $x'(u) < x'(v)$. We create $x'(v)$ units of flow from $u$ to $z$ as follows: since $x'(v) \leq x(v)/c$, we can route $x'(v)$ units of flow from $u$ to $v$ using the flow paths of $f$ and then route $x'(v)$ units of flow from $v$ to $z$ using the flow paths of $g'$. Therefore $f + 2 g'$ contains a congestion three sub-flow that routes $x'(u) = x'(v)$ units of flow for each pair $(u, v) \in \Pairs$. If we scale down this flow by a factor of $3$, we obtain a feasible flow $h$ with the desired properties.
\end{proof}

We preprocess the tree decomposition as in Section~\ref{sec:edp}. Thus we may assume that $G$ is connected and no bag is empty.

We prove Theorem~\ref{thm:induction-main} by induction on $\ell_1 + \ell_2 + \card{V(G)}$.

\mypar{Base case.}
In the base case, we assume that $\ell_1 = \ell_2 = 0$. Since every parent adhesion of a non-root node is non-empty,
that implies that the only bad node is the root $t_0$, that is, 
every flow path in $f$ passes through~$\bags(t_0)$, which is of size at most $r$.

We can obtain a flow $g$ from $\Pairs$ to $\bags(t_0)$ as follows: for each path $P$ in the support of $f$, let $P'$ be the smallest prefix of $P$ that ends at a vertex of $\bags(t_0)$, and let $g(P') = f(P)$ (since each path~$P$ in the support of~$f$ intersects $\bags(t_0)$, there exists such a prefix $P'$). Using the flows $f$ and $g$, we can construct a flow $h$ in $G$ from $V(\Pairs)$ to a single vertex $z \in \bags(t_0)$ with marginals $\pi$ such that $\pi(u) = \pi(v)$ for each pair $(u, v) \in \Pairs$ and $\card{h} \geq {1 \over 3r} \card{f}$. (This follows from the argument used in the proof of Lemma~\ref{lem:nice-flow}.). Thus $(G, \pi)$ is the desired decomposition.

In the inductive step, we consider two cases, depending on whether $0 \leq \ell_1 < \ell_2$ or $0 < \ell_1 = \ell_2$. The latter case can be handled in the same way as in Section~\ref{sec:edp}. The former case requires a different argument.

\mypar{Inductive step when $0 \leq \ell_1 < \ell_2$.}
Let $\{t_1,t_2,\ldots,t_p\}$ be the topmost bad nodes of $\sT$ with parent adhesions of size $\ell_2$, that is, 
it is a minimal set of such bad nodes such that for every bad node $t$ with parent adhesion of size $\ell_2$,
there exists $1 \leq i \leq p$ with $t \preceq t_i$.
For $1 \leq i \leq p$, let $f^{\mathrm{inside}}_i$ be the subflow of $f$ consisting of all paths that are completely contained in
$G[\alpha(t_i)]$. Furthermore, since $\ell_1 < \ell_2$, the node $t_i$ is safe; let $g_i$ be the corresponding flow, i.e., 
a flow that routes $\frac{1}{4r} x(v)$ from every $v \in \subtree(t_i)$ to $\adh(t_i)$ in $G(t_i)$. By applying Lemma~\ref{lem:nice-flow}, for each $i$ such that $1 \leq i \leq p$, there is a flow $h_i$ that is nice for $t_i$ and satisfies $\card{h_i} \geq {1 \over 12 r^2} \card{f^{\mathrm{inside}}_i}$. 

If $\sum_{i=1}^p \card{f^{\mathrm{inside}}_i} > \frac{1}{r} \card{f}$, we construct the desired decomposition as follows. Let $z_i \in \adh(t_i)$ be the endpoint of all the flow paths of $h_i$. Let $H_i = G[\alpha(t_i)] \cup \set{z_i}$. Note that $h_i$ is completely contained in~$H_i$ and the graphs $H_i$ are node-disjoint except for the vertices $z_i$. We group together the graphs with the same~$z_i$ vertex as follows. For each distinct vertex $z \in \set{z_1, \dots, z_p}$, let $H_z = \bigcup \set{H_i \colon 1 \leq i \leq p, z_i = z}$ and $h_z = \sum\set{h_i \colon 1 \leq i \leq p, z_i = z}$. Note that $h_z$ is a feasible flow in $H_z$ whose flow paths all end at $z$. The desired decomposition has a component $(H_z, \pi_z)$ for each distinct vertex $z$, where $\pi_z$ are the marginals of $h_z$. 

Therefore we may assume that $\sum_{i=1}^p \card{f^{\mathrm{inside}}_i} \leq \frac{1}{r} \card{f}$. In this case, we drop the flows $f^{\mathrm{inside}}_i$, that is, consider a flow
$f' \coloneqq f - \sum_{i=1}^p f^{\mathrm{inside}}_i$.
Clearly, $\card{f'} \geq (1-\frac{1}{r}) \card{f}$.
Furthermore, by the definition of $f^\mathrm{inside}_i$, every node $t_i$ is good with respect to $f'$.
Since deleting a flow path cannot turn a good node into a bad one nor a safe node into an unsafe one, and all descendants 
of a good node are also good, we infer that every unsafe node with respect to~$f'$ has parent adhesion of size at most~$\ell_1$,
while every bad node with respect to~$f'$ has parent adhesion of size \emph{less} than $\ell_2$. 
Consequently, by induction hypothesis we obtain a decomposition $(G_1, \pi_1), \dots, (G_q, \pi_q)$ satisfying
  $$\sum_{i = 1}^{q} \pi_i(X_i) \geq \frac{1}{12 r^3} \left(1-\frac{1}{r}\right)^{\ell_1 + \ell_2 - 1} \card{f'} \geq \frac{1}{12 r^3} \left(1-\frac{1}{r}\right)^{\ell_1 + \ell_2} \card{f}.$$

\mypar{Inductive step when $0 < \ell_1 = \ell_2$.}
This case follows from the argument given in Section~\ref{sec:edp}. We define two node-disjoint instances $\instance_1$ and $\instance_2$ exactly as before. We apply the induction hypothesis to each of the instaces and obtain a decomposition for each. Since the instances are node-disjoint, the union of the two decompositions  is the desired decomposition.

This concludes the proof of Theorem~\ref{thm:wl-induction-main}, which immediately implies Theorem~\ref{thm:wl-decomp}.
$~$\hfill$\Box$

\section{Hardness for \ndp in bounded treedepth graphs}
\label{sec:ndp-lb}

In this section we prove Theorem~\ref{thm:ndp-lb}.

We reduce from the $W[1]$-hard \textsc{Multicolored Clique} problem~\cite{FellowsEtAl2009}, where given a graph
$G$, an integer~$k$, and a partition $V = V^1 \uplus V^2 \uplus \ldots \uplus V^k$,
we are to check if there exists $k$-clique in $G$ with exactly one
vertex in every set $V^i$. By adding dummy vertices, we can assume that $|V^i| = n$ for every $1 \leq i \leq k$, and that $n, k \geq 2$.

\paragraph{Construction.}
Given an instance $(G,k,(V^i)_{i=1}^k)$ of \textsc{Multicolored Clique}, we aim at constructing an equivalent instance $(H,\termprs,\ell)$ of \ndp.

We start with a construction, for every set $V^i$, a gadget $W^i$ as follows.
First, for every $v \in V^i$ we construct an $(k-1)$-vertex path $X_v^i$ on vertices $x_{v,1}^i, x_{v,2}^i, \ldots,  x_{v,i-1}^i, x_{v,i+1}^i, \ldots, x_{v,k}^i$.
Second, we select an arbitrary vertex $u^i \in V_i$.
Third, for every $v \in V^i \setminus \{u^i\}$, we add a vertex $s^i_v$ adjacent to the first vertex of $X_v^i$ (i.e., $x_{v,1}^i$ and $x_{u^i,1}^i$ if $i > 1$ or $x_{v,2}^i$ and $x_{u^1,2}^i$ if $i=1$),
  a vertex $t^i_v$ adjacent to the last vertex of $X_v^i$ (i.e., $x_{v,k}^i$ and $x_{u^i,k}^i$ if $i < k$ or $x_{v,k-1}^i$ and $x_{u^i,k-1}^i$ if $i=k$), and make $(s^i_v,t^i_v)$ a terminal pair.
This concludes the description of the gadget $W^i$. By $\termprs_{st}$ we denote the set of terminal pairs constructed in this step.

To encode adjacencies in $G$, we proceed as follows. For every pair $1 \leq i < j \leq k$, we add a vertex $p_{i,j}$, adjacent
to all vertices $x_{v,j}^i$ for $v \in V_i$ and all vertices $x_{u,i}^j$ for $u \in V_j$. For every edge $vu \in E(G)$ with $v \in V_i$ and $u \in V_j$,
we add a terminal pair $(x_{v,j}^i, x_{u,i}^j)$. Let $\termprs_x$ be the set of terminal pairs constructed in this step; we have $\termprs = \termprs_{st} \cup \termprs_x$.

Finally, we set the required number of paths $\ell := k(n-1) + \binom{k}{2}$. This concludes the description of the instance $(H,\termprs,\ell)$; see Figure~\ref{fig:ndp-lb} for an illustration.

\begin{figure}[htb]
\begin{center}
\includegraphics{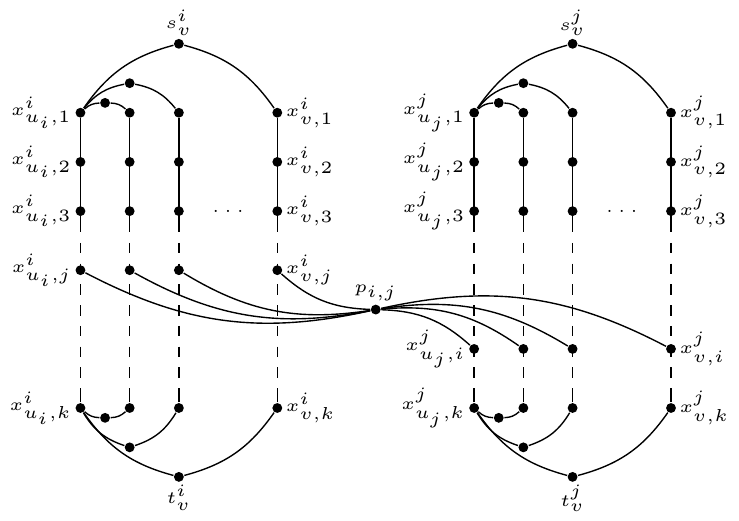}
\caption{Part of the construction of the graph $H$: gadgets $W^i$ and $W^j$, together with a connection via $p_{i,j}$.}
\label{fig:ndp-lb}
\end{center}
\end{figure}

\paragraph{From a clique to disjoint paths.}
Assume that the input \textsc{Multicolored Clique} instance is a ``yes''-instance, and let $\{v^i : 1 \leq i \leq k\}$ be a clique in $G$ with $v^i \in V^i$ for every $1 \leq i \leq k$.
We construct a family of~$\ell$ vertex-disjoint paths as follows. First, for every $1 \leq i \leq k$ and every $v \in V^i \setminus \{u^i\}$, we route a path from~$s^i_v$ to $t^i_v$
through the path $X_v^i$ if $v \neq v^i$, and through the path $X_{u^i}^i$ if $v = v^i$. Note that in this step we have created $k(n-1)$ vertex-disjoint paths
connecting terminal pairs, and in every gadget $W^i$ the only unused vertices are vertices on the path~$X_{v^i}^i$.
To construct the remaining $\binom{k}{2}$ paths, for every pair $1 \leq i < j \leq k$ we take the $3$-vertex path from $x_{v^i,j}^i$ to $x_{v^j,i}^j$ through $p_{i,j}$;
note that the assumption that $v^iv^j \in E(G)$ ensures that $(x_{v^i,j}^i, x_{v^j,i}^j)$ is indeed a terminal pair in $\termprs$.

\paragraph{From disjoint paths to a clique.}
In the other direction, let $\Pfam$ be a family of $\ell$ vertex-disjoint paths connecting terminal pairs in $H$. 
Let $\Pfam_{st} \subseteq \Pfam$ be the set of paths connecting terminal pairs from $\termprs_{st}$, and similarly define $\Pfam_x$.
First, observe that the set $P := \{p_{i,j} : 1 \leq i < j \leq k\}$ separates every terminal pair from~$\termprs_x$. Hence, every path from $\Pfam_x$ contains at least one vertex from $P$.
Since $|P| = \binom{k}{2}$, we have $|\termprs_x| \leq \binom{k}{2}$, and, consequently, $|\Pfam_{st}| \geq \ell - \binom{k}{2} = k(n-1) = |\termprs_{st}|$. 
We infer that $\Pfam_{st}$ routes all terminal pairs in $\termprs_{st}$ without using any vertex of $P$, while $\Pfam_x$ routes $\binom{k}{2}$ pairs from $\termprs_x$, and every path from $\Pfam_x$ contains exactly one vertex from $P$.

Since the paths in $\Pfam_{st}$ cannot use any vertex in $P$, every such path needs to be contained inside one gadget~$W^i$. Furthermore, observe that a shortest path between terminals $s_{v,a}^i$ and $t_{v,a}^i$ inside $W^i$ is either $X_{u^i}^i$ or~$X_v^i$,
      prolonged with the terminals at endpoints, and thus contains $k+1$ vertices.
Furthermore, a shortest path between two terminals in $\termprs_x$ contains three vertices. We infer that the total number of vertices on paths in $\Pfam$ is at least
$$|\Pfam_{st}| \cdot (k+1) + |\Pfam_x| \cdot 3 = k(n-1)(k+1) + 3\binom{k}{2} = k\left(n(k-1) + 2(n-1)\right) + \binom{k}{2} = |V(H)|.$$
We infer that every path in $\Pfam_{st}$ consists of $k+1$ vertices, and every path in $\Pfam_x$ consists of three vertices.
In particular, for every $1 \leq i \leq k$ and $v \in V^i \setminus \{u^i\}$, the path in $\Pfam_{st}$ that connects $s_v^i$ and $t_v^i$ goes either through~$X_v^i$ or~$X_{u^i}^i$.
Consequently, for every $1 \leq i \leq k$ there exists a vertex $v^i \in V^i$ such that the vertices of~$W^i$ that do not lie on any path from $\Pfam_{st}$ are exactly the vertices on the path $X_{v^i}^i$. 

We claim that $\{v^i : 1 \leq i \leq k\}$ is a clique in $G$. To this end, consider a pair $1 \leq i < j \leq k$.
Since $|\Pfam_x| = \binom{k}{2}$, there exists a path in $\Pfam_x$ that goes through~$p_{i,j}$. Moreover, this path has exactly three vertices. Since the only neighbours of $p_{i,j}$ that are not used
by paths from $\Pfam_{st}$ are $x_{v^i,j}^i$ and $x_{v^j,i}^j$, we infer that $(x_{v^i,j}^i, x_{v^j,i}^j) \in \termprs$ and, consequently, $v^iv^j \in E(G)$. This concludes the proof of the correctness of the construction.

\paragraph{Treedepth bound.}
We are left with a proof that $H$ has bounded treedepth. To this end, we use the alternative definition of treedepth~\cite{td-bound}: treedepth of an empty graph is $0$, while for
any graph~$G$ on at least one vertex we have that
$$\treedepth(G) =
  \begin{cases}
    1 + \min\{\treedepth(G-v) : v \in V(G)\} & \textrm{if }G\textrm{ is connected} \\
    \max\{\treedepth(C) : C \textrm{ connected component of } G\} & \textrm{otherwise.} 
  \end{cases}$$
First, observe that $H-P$ contains $k$ connected components, being the gadgets~$W^i$. Second, observe that the deletion of the endpoints of the path $X_{u^i}^i$ from the gadget $W^i$ breaks $W^i$ into connected components being paths on at most $k+1$ vertices.
Consequently,
$$\treedepth(H) \leq |P| + 2 + k+1 = \Oh(k^2).$$
This finishes the proof of Theorem~\ref{thm:ndp-lb}. \hfill$\Box$

\bibliographystyle{abbrv}
\bibliography{routing-treewidth}

\newpage
\appendix

\end{document}